\newcommand{\Eye}{\hbox{1\kern-4truept 1}}
\newtheorem{theorem}{Theorem}
\newtheorem{corollary}[theorem]{Corollary}
\newtheorem{definition}[theorem]{Definition}
\newtheorem{example}[theorem]{Example}
\newtheorem{lemma}[theorem]{Lemma}
\newtheorem{proposition}[theorem]{Proposition}
\newenvironment{proof}[1][Proof]{\textbf{#1.} }{\ \rule{0.5em}{0.5em}}
\journal{Journal of \LaTeX\ Templates}
\begin{document}

\begin{frontmatter}

\title{Quantum Covariance and Filtering}

%% Group authors per affiliation:
\author{John E. Gough\fnref{myfootnote}}
\address{Aberystwyth University, Wales, SY23 3BZ, UK}
%\fntext[myfootnote]{Since 1880.}

%% or include affiliations in footnotes:
%\author[mymainaddress,mysecondaryaddress]{Elsevier Inc}
%\ead[url]{www.elsevier.com}

%\author[mysecondaryaddress]{Global Customer Service\corref{mycorrespondingauthor}}
%\cortext[mycorrespondingauthor]{Corresponding author}
%\ead{support@elsevier.com}

%\address[mymainaddress]{1600 John F Kennedy Boulevard, Philadelphia}
%\address[mysecondaryaddress]{360 Park Avenue South, New York}

\begin{abstract}
We give a tutorial exposition of the analogue of the filtering equation for quantum systems focusing on the quantum probabilistic framework and developing the ideas from the classical theory. Quantum covariances and conditional expectations on von Neumann algebras play an essential part in the presentation.
\end{abstract}

\begin{keyword}
Quantum probability, quantum filtering, quantum Markovian systems
%\MSC[2010] 00-01\sep  99-00
\end{keyword}

\end{frontmatter}

%\linenumbers

\section{Introduction}

Nonlinear filtering theory is a well-developed field of engineering which is used to estimate unknown quantities in the presence of noise. One of the founders of the field was the Soviet mathematician Ruslan Stratonovich who encouraged his student Viacheslav Belavkin to extend the problem to the quantum domain \cite{Belavkin1}. Classically, estimation works by measuring one or more variables which are dependent on the variables to estimated, and Bayes Theorem plays an essential role in inferring the unknown variables based on what we measure. 
Belavkin's approach uses the theory of quantum stochastic calculus for continuous-in-time homodyne and photon counting measurements. There are several approaches: in the paper of Barchielli and Belavkin \cite{Barchielli_Belavkin}, the characteristic functional method is used to derive the photon-counting case, with the diffusive case obtained as an appropriate limit. Further details of the many approaches and applications may be found in the books by Barchielli and Gregoratti \cite{Barchielli_Gregoratti} and by Wiseman and Milburn \cite{Wiseman_Milburn}.

However, the proof of Bayes Theorem requires a joint probability distribution for the unknown variables and the measured ones. Once we go to quantum theory, we have to be very careful as incompatible observables do not possess a joint probability distribution - in such cases, applying Bayes Theorem will lead to erroneous results and is the root of many of the paradoxes in the theory.

We will derive the simplest quantum filter. The filter equation itself was originally postulated by Gisin on different grounds of continuous collapse of the wavefunction, but subsequently given a standard filtering interpretation \cite{GG91}. It also appeared as way of simulating quantum open systems due to Carmichael \cite{Carmichael} and Dalibard, Castin and M\o lmer \cite{D}: while this appears as a trick for simulating just the quantum master equation (analogue of the Fokker-Planck equation) by stochastic processes, it is clear that the authors consider an underlying interpretation based on continual measurements. The discrete-time version of the filter also featured in the famous Paris Photon-Box experiment \cite{Paris_PB}.

%%%%%%%%%%%%%%%%%%%%%%%%%%%%%%%%%%%%%%%%%%%%%%%%%%%%%%%%%%%%%%%%%%%%%%%%%%%%%%%%%%%%%%%%%%%%%%%%%%%%%%%%%%%%%%

\section{Quantum Probabilistic Setting}
We start from the tradition formulation of quantum theory in terms of operators on a separable Hilbert space, $\mathfrak{h}$. The norm of a linear operator $X$ is $\| X \| = \sup \{ \| X \phi \| :
\phi \in \mathfrak{h} , \| \phi \| =1 \}$, and the collection of bounded operators will be denoted by $B( \mathfrak{h})$. We will denote the identity operator by $\Eye$. The adjoint of $X \in B(\mathfrak{h})$ will be denoted by $X^\ast$.

Our interest will be in von Neumann algebras. These are unital *-algebras with that are closed in the weak operator topology. Here we say that a sequence of operators $(X_n)$ converges weakly in $B(\mathfrak{h})$ to $X$ if their matrix elements converge, that is $\langle \phi , X_n \psi \rangle \to \langle \phi , X \psi \rangle$ for all $\phi, \psi \in \mathfrak{h}$.

A pair $(\mathfrak{A}, \langle \cdot \rangle )$ consisting of a von Neumann algebra and a state is referred to as a \textbf{quantum probability} (QP) space \cite{Maassen88}.  

\paragraph{Commutative = Classical}
Kolmogorov's setting for classical probability is in terms of probability spaces $(\Omega , \mathcal{A} , \mathbb{P})$ where $\Omega$ is a space of outcomes (the sample space), $\mathcal{A}$ is a $\sigma$-algebra of subsets of $\Omega$, and $\mathbb{P}$ is a probability measure on the elements in $\mathcal{A}$. The collection of functions $\mathfrak{A} = L^\infty (\Omega , \mathcal{A} , \mathbb{P})$ will form a commutative von Neumann algebra and, moreover, a state is given by $\langle A \rangle = \int_\Omega A(\omega ) \, \mathbb{P} [d \omega ]$. (Conversely, every commutative von Neumann algebra with a state that is continuous in the normal topology, see below, will be isomorphic to this framework.)

\paragraph{Commutants}
There is an alternative definition of von Neumann algebras which, surprising, is purely algebraic. For a subset of operators $\mathfrak{A}$, we define its commutant in $B (\mathfrak{h} )$ to be 
\begin{eqnarray}
\mathfrak{A}^\prime = \{ X \in B(\mathfrak{h }) : [A,X] =0 , \forall A \in \mathfrak{A} \}.
\end{eqnarray}
The commutant of the commutant of $\mathfrak{A}$ is called the bicommutant and is denoted $\mathfrak{A}^{\prime \prime}$. Von Neumann's Bicommutant Theorem states that a collection of operators $\mathfrak{A}$ is a von Neumann algebra if and only if it is closed under taking adjoints and $\mathfrak{A} = \mathfrak{A}^{\prime \prime}$.

$B( \mathfrak{h })$ itself is a von Neumann algebra. If $\mathfrak{A}$ and $\mathfrak{B}$ are von Neumann algebras then $\mathfrak{B}$ is said to be coarser than $\mathfrak{A}$ if $\mathfrak{B} \subset \mathfrak{A}$. A collection of operators $K$ generates a von Neumann algebra $ \mathrm{vN}(K)  = (K \cup K^\ast )^{\prime \prime})$. 

\paragraph{States} A state on a von Neumann algebra is a *-linear functional $\langle \cdot \rangle : \mathfrak{A} \mapsto \mathbb{C}$ which is positive ($\langle X \rangle \ge 0$ whenever $X \ge 0$) and normalized ($\langle \Eye \rangle =1$). We will assume that the state is continuous in the normal topology, that is $\sup_n \mathbb{E} [X_n] = \mathbb{E} [ \sup_n X_n ]$ for any increasing sequence $(X_n)$ of positive elements of $\mathfrak{A}$. The main point of interest is that the normal state takes the form $\langle X \rangle= \mathrm{tr} \{ \varrho X \}$ for $\varrho$ a density matrix.

The state satisfies the Cauchy-Schwartz identity $ | \langle X^\ast Y \rangle |^2 \le \langle X^\ast X \rangle \, \langle Y^\ast Y \rangle $.

\paragraph{Morphisms between QP Spaces}

A morphism $\phi : (\mathfrak{A}_1, \langle \cdot \rangle_1 ) \mapsto (\mathfrak{A}_2, \langle \cdot \rangle_2 )$ between QP spaces is a normal, completely positive, *-linear map which preserves the identity, $\phi (\Eye_1 ) = \Eye_2$, and the probabilities, $\langle \phi (X) \rangle_2 = \langle X \rangle _1$ for all $X \in \mathfrak{A}_1$. If a morphism is a homomorphism,  that is, $\phi (X) \phi (Y) = \phi (XY)$ for all $X,Y \in \mathfrak{A}_1$, then we say that $\mathfrak{A}_1$ is embedded into $\mathfrak{A}_2$.

\paragraph{Tomita-Takesaki Theory} As operators do not necessarily commute we may have $\langle X^\ast Y \rangle $ different from $\langle Y X^\ast \rangle$. Nevertheless, it is possible to write
\begin{eqnarray}
\langle Y X^\ast \rangle = \langle X^\ast \Delta Y \rangle ,
\end{eqnarray}
where $\Delta $ is a positive (possibly unbounded operator on $\mathfrak{A}$ known as the modular operator. This plays a central role in the Tomita-Takesaki theory of von Neumann algebras. A one-parameter group $\{ \sigma_t : t \in \mathbb{R} \}$ of maps on $\mathfrak{A}$ is defined by $\sigma_t (X) = \Delta^{-it} X \Delta^{it}$ and is known as the modular group associated with the QP space $(\mathfrak{A}, \langle \cdot \rangle )$.

\begin{theorem}[Takesaki, \cite{Takesaki72}]
Let $(\mathfrak{A}, \langle \cdot \rangle )$ be a QP space and let $\mathfrak{B}$ be a von Neumann subalgebra of $\mathfrak{A}$. There will exist a morphism $\mathfrak{E}$ from $ \mathfrak{A}$ down to $\mathfrak{B}$ which is projective ($ \mathfrak{E} \circ \mathfrak{E}= \mathfrak{E} $) if and only if $\mathfrak{B}$ is invariant under the modular group of $(\mathfrak{A}, \langle \cdot \rangle )$.
\end{theorem}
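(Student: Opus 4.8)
The plan is to pass to the GNS representation of $(\mathfrak{A}, \langle\cdot\rangle)$, in which the state is implemented by a cyclic and separating vector $\Omega$ with $\langle X\rangle = \langle \Omega, X\Omega\rangle$ (I take the state to be faithful; otherwise one first cuts down by its support projection). The modular operator $\Delta$ and the modular conjugation $J$ then arise from the polar decomposition $S = J\Delta^{1/2}$ of the closure of the map $X\Omega \mapsto X^\ast\Omega$, with $\sigma_t(X) = \Delta^{-it}X\Delta^{it}$. The single object that organises the whole argument is the orthogonal projection $e$ onto the closed subspace $\mathfrak{H}_\mathfrak{B} := \overline{\mathfrak{B}\Omega}$. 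Since $\mathfrak{B}$ maps $\mathfrak{H}_\mathfrak{B}$ into itself one checks immediately that $e \in \mathfrak{B}^\prime$, that $e\Omega = \Omega$, and that $\Omega$ is cyclic and separating for $\mathfrak{B}$ acting on $\mathfrak{H}_\mathfrak{B}$.

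First I would reformulate the modular invariance spectrally: $\sigma_t(\mathfrak{B}) = \mathfrak{B}$ for all $t$ if and only if $[\Delta^{it}, e] = 0$ for all $t$. The forward implication is direct, since $\Delta^{it}\Omega = \Omega$ gives $\Delta^{it}B\Omega = \sigma_{-t}(B)\Omega \in \mathfrak{B}\Omega$, so $\Delta^{it}$ preserves $\mathfrak{H}_\mathfrak{B}$ and, being unitary, commutes with $e$. Assuming this commutation, Tomita--Takesaki theory applied to the pair $(\mathfrak{B}|_{\mathfrak{H}_\mathfrak{B}}, \Omega)$ identifies its modular data as the restrictions of $\Delta$ and $J$; in particular $Je = eJ$. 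I would then define $\mathfrak{E}(X)$, for $X \in \mathfrak{A}$, to be the compression $eXe$ regarded as an operator on $\mathfrak{H}_\mathfrak{B}$, and show it lies in $\mathfrak{B}$. The decisive computation is that, because $e \in \mathfrak{B}^\prime$ and $Je = eJ$, the compression $eXe$ commutes with the restricted commutant $(\mathfrak{B}|_{\mathfrak{H}_\mathfrak{B}})^\prime = J\mathfrak{B}J|_{\mathfrak{H}_\mathfrak{B}}$, so the bicommutant theorem places $\mathfrak{E}(X)$ in $\mathfrak{B}$. Normality, complete positivity, unitality and idempotence (it fixes $\mathfrak{B}$) are then routine, and the state is preserved because $\langle \mathfrak{E}(X)\rangle = \langle\Omega, eXe\,\Omega\rangle = \langle\Omega, X\Omega\rangle = \langle X\rangle$ using $e\Omega = \Omega$.

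For the converse I would start from a state-preserving projective morphism $\mathfrak{E}:\mathfrak{A}\to\mathfrak{B}$ and show $\mathfrak{B}$ is modular invariant. The cleanest route invokes the KMS characterisation of the modular group as the unique one-parameter automorphism group with respect to which $\langle\cdot\rangle$ satisfies the KMS condition. By Tomiyama's theorem the unital positive idempotent $\mathfrak{E}$ is automatically a $\mathfrak{B}$-bimodule map, which together with state preservation yields the identities $\langle \mathfrak{E}(X)B\rangle = \langle XB\rangle$ and $\langle B\mathfrak{E}(X)\rangle = \langle BX\rangle$ for $X\in\mathfrak{A}$, $B\in\mathfrak{B}$. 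From these one shows that $\mathfrak{E}$ intertwines the modular groups, $\sigma_t\circ\mathfrak{E} = \mathfrak{E}\circ\sigma_t$; since $\mathfrak{E}$ fixes $\mathfrak{B}$ pointwise this forces $\sigma_t(\mathfrak{B}) = \mathfrak{B}$.

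The main obstacle is the sufficiency construction, specifically the converse half of the spectral reformulation and the claim that the compression $eXe$ genuinely lands in $\mathfrak{B}$ and not merely in the larger algebra computed on the subspace. Both hinge on the commutation $[\Delta^{it},e]=0$: it is exactly this that upgrades the restricted algebra's commutant to $J\mathfrak{B}J$ and yields $Je=eJ$. Without modular invariance this identification breaks down and, indeed, no state-preserving conditional expectation need exist, which is what makes the equivalence sharp.
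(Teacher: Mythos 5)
The paper itself offers no proof of this theorem --- it is quoted directly from Takesaki's 1972 paper --- so your proposal has to be judged against the standard literature argument, which your outline does in fact reconstruct: GNS representation, the projection $e$ onto $\mathfrak{H}_{\mathfrak{B}} = \overline{\mathfrak{B}\Omega}$, identification of the reduced modular data, compression $eXe$ pulled back into $\mathfrak{B}$ by the bicommutant theorem, and a converse running through the modular group. That architecture is the right one, and the verifications you call routine (bimodule property, unitality, state preservation, normality, complete positivity) really are routine once $eXe \in \mathfrak{B}e$ is established.

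However, at both hinges of the argument you assert precisely the step that carries the content. First, your ``spectral reformulation'' is claimed as an equivalence, but you only prove invariance $\Rightarrow$ $[\Delta^{it},e]=0$; the reverse implication, which you then invoke to identify the modular data of $(\mathfrak{B}|_{\mathfrak{H}_{\mathfrak{B}}},\Omega)$ with the restrictions of $\Delta$ and $J$, does not follow from the commutation alone. What commutation gives is an inclusion $S_{\mathfrak{B}} \subset S|_{\mathfrak{H}_{\mathfrak{B}}}$ of closed conjugate-linear involutions, and such inclusions can be proper, so $\Delta_{\mathfrak{B}}$ need not be $\Delta e$ on this evidence. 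The correct route --- available to you, since invariance is the hypothesis in that direction --- is KMS uniqueness: $\sigma_t|_{\mathfrak{B}}$ is an automorphism group of $\mathfrak{B}$ satisfying the KMS condition for the restricted state, hence is its modular group; being implemented on $\mathfrak{H}_{\mathfrak{B}}$ by $\Delta^{it}e$ with fixed vector $\Omega$, it forces $\Delta_{\mathfrak{B}} = \Delta e$ and then $J_{\mathfrak{B}} = Je$. (Your ``decisive computation'' also silently uses Tomita's theorem $J\mathfrak{A}J = \mathfrak{A}^\prime$ so that $J\mathfrak{B}J$ commutes with $X$; worth saying.) Second, and more seriously, the converse's key claim $\sigma_t\circ\mathfrak{E} = \mathfrak{E}\circ\sigma_t$ is said to follow from Tomiyama plus the averaging identities $\langle\mathfrak{E}(X)B\rangle = \langle XB\rangle$, $\langle B\mathfrak{E}(X)\rangle = \langle BX\rangle$. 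Purely algebraically this is circular: those identities only test against $\mathfrak{B}$, which cannot separate the element $\sigma_t(\mathfrak{E}(X)) \in \mathfrak{A}$ from the element $\mathfrak{E}(\sigma_t(X)) \in \mathfrak{B}$, and any KMS comparison between $\sigma_t$ and the modular group of $\mathfrak{B}$ presupposes knowing how $\sigma_t$ acts on $\mathfrak{B}$ --- the very statement to be proved. The standard mechanism is Hilbert-space level: show $e$ implements $\mathfrak{E}$, i.e.\ $eX\Omega = \mathfrak{E}(X)\Omega$ (well defined by Kadison--Schwarz and state preservation); the $\ast$-property of $\mathfrak{E}$ then gives $eS \subset Se$ for the Tomita involution $S$, hence $[e,\Delta^{it}]=0$; finally $\mathfrak{E}(\sigma_t(X))\Omega = e\Delta^{-it}X\Omega = \Delta^{-it}eX\Omega = \sigma_t(\mathfrak{E}(X))\Omega$, and since $\Omega$ is separating for $\mathfrak{A}$ this yields the intertwining, whence $\sigma_t(B) = \mathfrak{E}(\sigma_t(B)) \in \mathfrak{B}$. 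Both gaps are fillable, and with tools you yourself put on the table, but as written the proof assumes its two hardest steps.
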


%%%%%%%%%%%%%%%%%%%%%%%%%%%%%%%%%%%%%%%%%%%%%%%%%%%%%%%%%%%%%%%%%%%%%%%%%%%%%%%%%%%%%%%%%%%%%%%%%%%%%%%%%%%%%%%%

\subsection{Quantum Conditioning}
We fix a QP space  $\big( \mathfrak{A}, \langle \cdot \rangle \big)$, and define the\textbf{ covariance} of two elements $X,Y \in \mathfrak{A}$ to be
\begin{eqnarray}
\mathrm{Cov} (X, Y ) \triangleq \langle X^\ast Y \rangle - \langle X \rangle ^\ast \langle Y \rangle .
\end{eqnarray}
Likewise the \textbf{variance} is defined as $\mathrm{Var} (X) \triangleq \mathrm{Cov} (X,X)$.

The idea is that we have a subset $\mathfrak{B} \subset \mathfrak{A}$, and we want to associate an element $\mathfrak{E} [A] \in \mathfrak{B}$ with each $A \in \mathfrak{A}$, see Figure \ref{fig:Venn_Neumann}.
As $\mathfrak{B}$ is smaller than $\mathfrak{A}$ we think of $\mathfrak{E}[A]$ as a coarse-grained version of $A$ based on a less information.
The map $\mathfrak{E}$ therefore compresses the model $(\mathfrak{A} , \langle \cdot \rangle )$ into a coarser one on $\mathfrak{B}$: we would like to do this is a way that preserves averages.

\begin{figure}[htbp]
	\centering
		\includegraphics[width=0.40\textwidth]{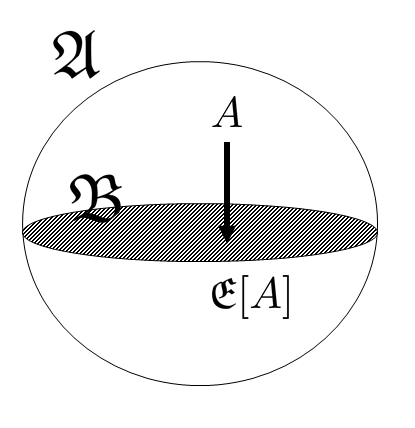}
	\caption{A conditional expectation $\mathfrak{E}$ is a projection from an algebra $\mathfrak{A}$ of random objects down into a smaller algebra $\mathfrak{B}$ such that $\langle \mathfrak{E} [A] \rangle  = \langle A \rangle$.}
	\label{fig:Venn_Neumann}
\end{figure}

We now list some desirable features for $\mathfrak{E}$ which we have already encountered in the classical case:
for any $X,Y,A \in \mathfrak{A}$, $\alpha , \beta \in \mathbb{C}$ and $B_1,B_2 \in \mathfrak{B}$,
\begin{enumerate}
\item[(CE1)] linearity: $\mathfrak{E} [ \alpha X + \beta Y ] = \alpha \mathfrak{E}[X] + \beta \mathfrak{E}[Y]$;
\item[(CE2)] *-map: $\mathfrak{E}[X^\ast  ] = \mathfrak{E}[X]^\ast$;
\item[(CE3)] conservativity: $\mathfrak{E}[\Eye ]= \Eye$;
\item[(CE4)] compatibility: $\langle \mathfrak{E}[A] \rangle = \langle A \rangle$;
\item[(CE5)] projectivity: $\mathfrak{E} [ \mathfrak{E}[A]] = \mathfrak{E}[A]$;
\item[(CE6)] peelability: $\mathfrak{E}[ B_1 A B_2 ] =B_1  \mathfrak{E}[  A ] B_2 $;
\item[(CE7)] positivity: $\mathfrak{E} [A] \ge 0$ whenever $A \ge 0$.
\end{enumerate}

We call property (CE6) \lq\lq peelability\rq\rq \, for the lack of a better name and we emphasize that the order of the operators is important. 
Property (CE7) is known to be insufficient to deal with quantum theory and must be strengthened as follows:
\begin{enumerate}
\item[(CE7$^\prime$)] complete positivity: for each integer $n \ge 1$
\begin{eqnarray}
\left[
\begin{array}{ccc}
\mathfrak{E} [A_{11}] & \cdots & \mathfrak{E}[A_{1n}] \\
\vdots & \ddots  & \vdots  \\
\mathfrak{E}[A_{n1}] & \cdots & \mathfrak{E}[A_{nn}] 
\end{array}
\right]
\ge 0
\; \mathrm{whenever} \,
\left[
\begin{array}{ccc}
A_{11} & \cdots & A_{1n} \\
\vdots & \ddots & \vdots \\
A_{n1} & \cdots & A_{nn} 
\end{array}
\right] \ge 0.
\end{eqnarray}
\end{enumerate}

\begin{definition}
Let $\mathfrak{A}$ and $\mathfrak{B}$ be a unital *-algebras with $\mathfrak{B}$ a subalgebra of $\mathfrak{A}$, then a mapping $\mathfrak{E}:\mathfrak{A} \mapsto \mathfrak{B}$ satisfying properties (CE1)-(CE6) and (CE7$^{\, \prime}$) is a quantum conditional expectation.
\end{definition}

\begin{proposition}
A quantum conditional expectation $\mathfrak{E}$ acts as the identity map on $\mathfrak{B}$.
\end{proposition}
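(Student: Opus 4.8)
The plan is to derive the statement directly from the peelability property (CE6) together with conservativity (CE3), rather than routing through projectivity. First I would note that, since $\mathfrak{B}$ is a unital subalgebra of $\mathfrak{A}$ sharing the unit $\Eye$, the identity element belongs to $\mathfrak{B}$; this is exactly what permits $\Eye$ to be used as one of the ``peelable'' factors in (CE6).

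Next, for an arbitrary $B \in \mathfrak{B}$, I would write $B = B\, \Eye\, \Eye$ and apply the peelability property (CE6) with the choices $B_1 = B$, $A = \Eye$, and $B_2 = \Eye$, all of which are legitimate since $B, \Eye \in \mathfrak{B}$. This yields
\begin{eqnarray}
\mathfrak{E}[B] = \mathfrak{E}[B\, \Eye\, \Eye] = B\, \mathfrak{E}[\Eye]\, \Eye .
\end{eqnarray}
Invoking conservativity (CE3), namely $\mathfrak{E}[\Eye] = \Eye$, the right-hand side collapses to $B$, whence $\mathfrak{E}[B] = B$ for every $B \in \mathfrak{B}$, which is the claim.

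The computation is immediate, so there is no substantial obstacle here; the only point requiring care is the bookkeeping that $\Eye \in \mathfrak{B}$, so that the peelability hypothesis genuinely applies with the identity inserted as a factor. It is worth remarking why I would avoid the superficially tempting route through projectivity (CE5): projectivity only establishes that $\mathfrak{E}$ fixes elements lying in its range $\mathfrak{E}[\mathfrak{A}]$, and to upgrade this to all of $\mathfrak{B}$ one would need the additional fact that $\mathfrak{E}$ maps \emph{onto} $\mathfrak{B}$, which is not among the stated hypotheses. Peelability combined with conservativity sidesteps this surjectivity question entirely, which is why I would take that route.
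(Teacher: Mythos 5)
Your proof is correct and is essentially the paper's own argument: both apply peelability (CE6) with $A = \Eye$ and then conclude via conservativity (CE3), the only (immaterial) difference being that the paper peels $B$ off on the right ($B_1 = \Eye$, $B_2 = B$, giving $\mathfrak{E}[B] = \mathfrak{E}[\Eye]\,B$) while you peel it off on the left. Your closing remark about why projectivity alone would not suffice is a sensible observation, but it does not change the fact that the route taken coincides with the paper's.
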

\begin{proof}
Set $A=B_1 = \Eye$ and $B_2 = B\in \mathfrak{B}$, then peelability implies that $\mathfrak{E}[B] = \mathfrak{E}[\Eye ] B$. So the result follows from conservativity.
\end{proof}

\paragraph{Existence}
We observe that the conditional expectation always exists in the classical world. Here $\mathfrak{A}$ can be identified as some $ L^\infty (\Omega , \mathcal{A} , \mathbb{P})$ and then the subalgebra $\mathfrak{B}$ will be then take the form  $L^\infty (\Omega , \mathcal{B} , \mathbb{P})$ where $\mathcal{B}$ is a coarser $\sigma$-algebra. Conditional expectation is then well defined: For $A \in L^\infty (\Omega , \mathcal{A} , \mathbb{P})$ one sets $\mu_A [I] = \int_I A(\omega ) \mathbb{P} [d \omega ]$ for each $I \in \mathcal{B}$ then $\mu_A$ is absolutely continuous with respect to $\mathbb{P} |_{\mathcal{G}}$ and its Radon-Nikodym derivative is the conditional expectation which we denote as $\mathbb{E} [ A| \mathcal{B}]$. This is explicit in Kolmogorov's original paper.

In contrast, quantum conditional expectations need not exits. By definition, they satisfy the requirements of the Takesaki Theorem above (and additionally the peelability condition) so we need further invariance of the subalgebra $\mathfrak{B}$ under the modular group.

\subsection{Quantum Covariance}

\begin{definition}
Let $\mathfrak{E}$ be a quantum conditional expectation from $\mathfrak{A}$ onto a subalgebra $\mathfrak{B}$.
For each $A\in \mathfrak{A}$, we define $\delta A \triangleq A- \mathfrak{E}[A]$. The conditional covariance of $X,Y \in \mathfrak{A}$ is defined to be
\begin{eqnarray}
\mathrm{Cov}_{\mathfrak{B}} (X,Y) \triangleq \mathfrak{E} [ \delta X^\ast \, \delta Y ] .
\end{eqnarray}
The conditional variance is
\begin{eqnarray}
\mathrm{Var}_{\mathfrak{B}} (X) \triangleq \mathrm{Cov}_{\mathfrak{B}}  (X,X).
\end{eqnarray}
\end{definition}

Note that 
\begin{eqnarray}
\mathfrak{E} [\delta A] =
\langle \delta A \rangle =0 ,
\end{eqnarray}
for every $A \in \mathfrak{A}$. 
It is worth emphasizing that the conditional covariance defined here is an operator on $\mathfrak{B}$, not a scalar.

\begin{lemma}
\label{lem:delta}
We have $\mathfrak{E} [ B_1 \, \delta A \, B_2 ] =0$ whenever $A \in \mathfrak{A}$ and $B_1, B_2 \in \mathfrak{B}$. In particular, 
$\mathfrak{E} [ B \, \delta A  ] =0$ whenever $A \in \mathfrak{A}$ and $B \in \mathfrak{B}$. 
\end{lemma}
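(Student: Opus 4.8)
The plan is to prove this directly from the defining axioms, specifically linearity (CE1), peelability (CE6), and projectivity (CE5), together with the fact already established in the Proposition that $\mathfrak{E}$ maps into $\mathfrak{B}$ and restricts to the identity there.

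First I would expand the definition $\delta A = A - \mathfrak{E}[A]$ and use linearity to split the expression into two pieces,
\begin{eqnarray}
\mathfrak{E}[ B_1 \, \delta A \, B_2 ] = \mathfrak{E}[ B_1 A B_2 ] - \mathfrak{E}[ B_1 \, \mathfrak{E}[A] \, B_2 ].
\end{eqnarray}
The first term is handled immediately by peelability, which peels the $\mathfrak{B}$-factors off to give $B_1 \, \mathfrak{E}[A] \, B_2$.

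For the second term the key observation is that $\mathfrak{E}[A]$ already lies in $\mathfrak{B}$, and since $\mathfrak{B}$ is a subalgebra the entire sandwich $B_1 \, \mathfrak{E}[A] \, B_2$ is again an element of $\mathfrak{B}$. Applying peelability a second time and then invoking projectivity, $\mathfrak{E}[ \mathfrak{E}[A] ] = \mathfrak{E}[A]$, shows that the second term also equals $B_1 \, \mathfrak{E}[A] \, B_2$. The two contributions therefore cancel and the claim follows. The stated particular case is recovered by setting $B_1 = B$ and $B_2 = \Eye$.

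I do not expect any genuine obstacle here, since the statement is essentially a bookkeeping consequence of the axioms for a conditional expectation. The one point I would flag for care is verifying that $B_1 \, \mathfrak{E}[A] \, B_2$ really does land in $\mathfrak{B}$, so that $\mathfrak{E}$ leaves it fixed; this relies on $\mathfrak{B}$ being closed under multiplication, that is, on its being a genuine subalgebra rather than merely a subspace.
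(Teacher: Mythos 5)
Your proof is correct and takes essentially the same route as the paper: the paper simply applies peelability once with $\delta A$ as the middle element and then uses $\mathfrak{E}[\delta A]=0$ (which is linearity plus projectivity), whereas you expand by linearity first and peel each term separately --- the same ingredients in a different order. One small remark: the closure point you flag is not actually needed, since peelability applies with $\mathfrak{E}[A]$ as the middle element (it only needs to lie in $\mathfrak{A}$), so you never have to verify that $B_1\,\mathfrak{E}[A]\,B_2 \in \mathfrak{B}$.
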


The proof depends crucially on peelability: $\mathfrak{E} [ B_1 \, \delta A \, B_2 ] = B_1 \,\mathfrak{E} [  \delta A ] \, B_2 =0$.
The following result is trivial classically, but again requires peelability in the non-commutative setting.

\begin{proposition}
The conditional covariance may alternatively be written as
\begin{eqnarray}
\mathrm{Cov}_{\mathfrak{B}} (X,Y) = \mathfrak{E} [X^\ast Y ] - \mathfrak{E} [X]^\ast \mathfrak{E}[Y] .
\label{eq:cov_form}
\end{eqnarray}
\end{proposition}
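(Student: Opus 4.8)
The plan is to expand the product $\delta X^\ast \, \delta Y$ directly and then apply $\mathfrak{E}$ term by term, using peelability (CE6) to evaluate each piece. First I would record a preliminary fact: since $\mathfrak{E}$ is a *-map (CE2), the increment commutes with the adjoint, $(\delta X)^\ast = (X - \mathfrak{E}[X])^\ast = X^\ast - \mathfrak{E}[X]^\ast = X^\ast - \mathfrak{E}[X^\ast] = \delta(X^\ast)$, so writing $\delta X^\ast$ is unambiguous. I would also note that $\mathfrak{E}[X]$, $\mathfrak{E}[Y]$ and their adjoints all lie in $\mathfrak{B}$, since $\mathfrak{E}$ maps into $\mathfrak{B}$ and $\mathfrak{B}$ is a *-subalgebra; this is what licenses the use of peelability on them.

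Next I would expand
$$\delta X^\ast \, \delta Y = X^\ast Y - X^\ast \mathfrak{E}[Y] - \mathfrak{E}[X]^\ast Y + \mathfrak{E}[X]^\ast \mathfrak{E}[Y],$$
and apply $\mathfrak{E}$ using linearity (CE1). The two middle terms and the last term are handled by peelability: since $\mathfrak{E}[Y] \in \mathfrak{B}$ sits on the right, $\mathfrak{E}[X^\ast \mathfrak{E}[Y]] = \mathfrak{E}[X^\ast]\,\mathfrak{E}[Y] = \mathfrak{E}[X]^\ast \mathfrak{E}[Y]$; since $\mathfrak{E}[X]^\ast \in \mathfrak{B}$ sits on the left, $\mathfrak{E}[\mathfrak{E}[X]^\ast Y] = \mathfrak{E}[X]^\ast \mathfrak{E}[Y]$; and the last term gives $\mathfrak{E}[\mathfrak{E}[X]^\ast \mathfrak{E}[Y]] = \mathfrak{E}[X]^\ast \mathfrak{E}[\Eye]\, \mathfrak{E}[Y] = \mathfrak{E}[X]^\ast \mathfrak{E}[Y]$ using conservativity (CE3). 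Collecting the four contributions, three copies of $\mathfrak{E}[X]^\ast \mathfrak{E}[Y]$ reduce to a single one with a minus sign, leaving exactly $\mathfrak{E}[X^\ast Y] - \mathfrak{E}[X]^\ast \mathfrak{E}[Y]$.

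A shorter route invokes Lemma \ref{lem:delta}: writing $\delta X^\ast \, \delta Y = \delta X^\ast \, Y - \delta X^\ast \, \mathfrak{E}[Y]$, the second term vanishes under $\mathfrak{E}$ because $\mathfrak{E}[\delta(X^\ast)\,\mathfrak{E}[Y]] = 0$ by the lemma, the $\mathfrak{B}$-element $\mathfrak{E}[Y]$ being peeled off on the right. This reduces the covariance to $\mathfrak{E}[\delta X^\ast \, Y] = \mathfrak{E}[X^\ast Y] - \mathfrak{E}[\mathfrak{E}[X]^\ast Y]$, and a single application of peelability to the last term finishes the identity.

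I do not anticipate a serious obstacle. The only subtlety — and the reason this is not merely the classical bookkeeping that collapses by commutativity — is that the operators do not commute, so peelability must be applied with the $\mathfrak{B}$-valued factors pulled out on the correct side: $\mathfrak{E}[X]^\ast$ always on the left and $\mathfrak{E}[Y]$ always on the right. Keeping this ordering straight is essentially the whole content of the argument.
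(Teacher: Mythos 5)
Your proposal is correct and follows essentially the same route as the paper: expand $\delta X^\ast\,\delta Y$ into the four terms, apply $\mathfrak{E}$ by linearity, and reduce the three cross/product terms to $\mathfrak{E}[X]^\ast\mathfrak{E}[Y]$ via peelability (the paper phrases this through Lemma \ref{lem:delta}, which is precisely your shorter second route). Your added remarks on $(\delta X)^\ast = \delta(X^\ast)$ and on keeping the $\mathfrak{B}$-valued factors on the correct side are sound and consistent with the paper's emphasis that the order of operators matters in (CE6).
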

\begin{proof}
From \ref{lem:delta} we then have
\begin{eqnarray*}
\mathrm{Cov}_{\mathfrak{B}} (X,Y) &=& \mathfrak{E} \bigg[ X^\ast Y -\mathfrak{E} [X]^\ast Y - X^\ast \mathfrak{E}[Y] +  \mathfrak{E}[X]^\ast \mathfrak{E}[Y] \bigg] \\
 &=& \mathfrak{E}  [ X^\ast Y ] -\mathfrak{E} [X]^\ast \mathfrak{E}[Y] - \mathfrak{E}[X]^\ast \mathfrak{E}[Y] +  \mathfrak{E}[X]^\ast \mathfrak{E}[Y] 
\end{eqnarray*}
and the result follows.
\end{proof}

\begin{proposition}
The conditional covariance has the invariance property
\begin{eqnarray}
\mathrm{Cov}_{\mathfrak{B}} (X+B_1,Y+B_2) = \mathrm{Cov} _{\mathfrak{B}} (X,Y) ,
\label{eq:cov_invariance}
\end{eqnarray}
for all $B_1 , B_2 \in \mathfrak{B}$.
\end{proposition}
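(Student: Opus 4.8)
The plan is to reduce the entire statement to the single observation that the fluctuation operator $\delta$ annihilates the subalgebra $\mathfrak{B}$. First I would invoke the earlier Proposition that $\mathfrak{E}$ acts as the identity map on $\mathfrak{B}$, so that $\mathfrak{E}[B] = B$ for every $B \in \mathfrak{B}$, and hence
\begin{eqnarray*}
\delta B = B - \mathfrak{E}[B] = 0, \qquad B \in \mathfrak{B}.
\end{eqnarray*}

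Next, using linearity of $\mathfrak{E}$ (property (CE1)) I would note that $\delta$ is itself linear, so that shifting an argument by an element of $\mathfrak{B}$ leaves its fluctuation unchanged:
\begin{eqnarray*}
\delta(X + B_1) = (X + B_1) - \mathfrak{E}[X + B_1] = X - \mathfrak{E}[X] = \delta X,
\end{eqnarray*}
and likewise $\delta(Y + B_2) = \delta Y$. Substituting these into the defining formula $\mathrm{Cov}_{\mathfrak{B}}(X,Y) = \mathfrak{E}[\delta X^\ast \, \delta Y]$ then gives
\begin{eqnarray*}
\mathrm{Cov}_{\mathfrak{B}}(X + B_1, Y + B_2) = \mathfrak{E}\big[ \delta(X+B_1)^\ast \, \delta(Y+B_2) \big] = \mathfrak{E}[\delta X^\ast \, \delta Y] = \mathrm{Cov}_{\mathfrak{B}}(X, Y),
\end{eqnarray*}
which is the asserted invariance.

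There is no genuine obstacle here: the whole content is carried by the identity $\delta B = 0$, and once that is in place the result is immediate from the definition. If one preferred to avoid the fluctuation notation entirely, the alternative expression \eqref{eq:cov_form} combined with $\mathfrak{E}[X + B_1] = \mathfrak{E}[X] + B_1$ would produce the same cancellation, but I expect the $\delta$ route to be cleaner and to make the geometric meaning of the invariance—that conditional covariance depends only on the parts of $X$ and $Y$ orthogonal to $\mathfrak{B}$—most transparent.
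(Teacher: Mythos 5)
Your proof is correct, and it takes a genuinely different route from the paper. The paper proves this proposition by expanding via the alternative expression (\ref{eq:cov_form}): it writes out $\mathfrak{E}\big[ (X+B_1)^\ast (Y+B_2) \big] - \big(\mathfrak{E}[X]+B_1\big)^\ast \big(\mathfrak{E}[Y]+B_2\big)$ and then invokes peelability (CE6) to cancel the cross terms. You instead work directly with the defining formula $\mathrm{Cov}_{\mathfrak{B}}(X,Y) = \mathfrak{E}[\delta X^\ast \, \delta Y]$ and show that the fluctuation map itself is invariant, $\delta(X+B_1) = \delta X$, which follows from linearity (CE1) together with the earlier Proposition that $\mathfrak{E}$ restricts to the identity on $\mathfrak{B}$. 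Both arguments ultimately rest on peelability --- in your case it is hidden inside that earlier Proposition, since $\mathfrak{E}[B]=B$ is itself a consequence of (CE6) and (CE3) --- but your route never needs to expand a product of four terms or to invoke (\ref{eq:cov_form}), whose own derivation required Lemma \ref{lem:delta}. What your approach buys is economy and transparency: the invariance is exhibited as a one-line consequence of the fact that conditional covariance sees only the fluctuations $\delta X$, $\delta Y$, which is exactly the geometric content of the statement. What the paper's approach buys is that it exercises the formula (\ref{eq:cov_form}), which is the form of the conditional covariance actually used later, so the reader sees the cancellation mechanism at the level of that expression.
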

\begin{proof}
From *-linearity and (\ref{eq:cov_form}), we see that the left hand side of (\ref{eq:cov_invariance}) equals
\begin{eqnarray*}
 \mathfrak{E} \big[ X^\ast Y +X^\ast B_2 +B_1^\ast Y +B_1^\ast B_2  \big] 
-\big( \mathfrak{E}  [ X] +B_1 \big)^\ast \big( \mathfrak{E} [Y] +B_2 \big) 
\end{eqnarray*}
and the result follows using peelability.
\end{proof}

\begin{lemma}
\label{lem:cov}
The covariance and conditional covariance are related by
\begin{eqnarray}
\mathrm{Cov} (X,Y) = 
\langle \mathrm{Cov}_{\mathfrak{B}}  (X,Y) \rangle +
\big\langle ( \mathfrak{E}[X] - \langle X \rangle )^\ast
 ( \mathfrak{E}[Y] - \langle Y \rangle ) \big\rangle. 
\end{eqnarray}
\end{lemma}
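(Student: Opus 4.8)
The plan is to verify the identity by the direct ``law of total covariance'' route: expand each of the two terms on the right-hand side, apply the state, and watch the cross terms cancel. The only genuinely noncommutative inputs are the compatibility property (CE4) and the $\ast$-property of the state, $\langle Z^\ast \rangle = \langle Z \rangle^\ast$; everything else is bookkeeping of scalars against operators.

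First I would rewrite the conditional covariance using the alternative form (\ref{eq:cov_form}) and take the state, so that
\begin{eqnarray*}
\langle \mathrm{Cov}_{\mathfrak{B}}(X,Y) \rangle = \langle \mathfrak{E}[X^\ast Y] \rangle - \langle \mathfrak{E}[X]^\ast \mathfrak{E}[Y] \rangle .
\end{eqnarray*}
Compatibility (CE4) applied to the element $X^\ast Y \in \mathfrak{A}$ collapses the first term to $\langle X^\ast Y \rangle$, leaving
\begin{eqnarray*}
\langle \mathrm{Cov}_{\mathfrak{B}}(X,Y) \rangle = \langle X^\ast Y \rangle - \langle \mathfrak{E}[X]^\ast \mathfrak{E}[Y] \rangle .
\end{eqnarray*}

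Next I would expand the second (``between'') term by $\ast$-linearity. Because $\langle X \rangle$ and $\langle Y \rangle$ are $\mathbb{C}$-scalars they factor out of the state, and using (CE4) together with $\langle \mathfrak{E}[X]^\ast \rangle = \langle \mathfrak{E}[X] \rangle^\ast = \langle X \rangle^\ast$ the two mixed terms each reduce to $\langle X \rangle^\ast \langle Y \rangle$; combining these with the surviving constant term gives
\begin{eqnarray*}
\big\langle ( \mathfrak{E}[X] - \langle X \rangle )^\ast ( \mathfrak{E}[Y] - \langle Y \rangle ) \big\rangle = \langle \mathfrak{E}[X]^\ast \mathfrak{E}[Y] \rangle - \langle X \rangle^\ast \langle Y \rangle .
\end{eqnarray*}

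Finally I would add the two displayed expressions: the $\langle \mathfrak{E}[X]^\ast \mathfrak{E}[Y] \rangle$ terms cancel and what remains is precisely $\langle X^\ast Y \rangle - \langle X \rangle^\ast \langle Y \rangle = \mathrm{Cov}(X,Y)$. I do not expect a real obstacle; the single point that demands care is distinguishing the scalars $\langle X \rangle, \langle Y \rangle$ (which commute through the state and conjugate as complex numbers) from the operator $\mathfrak{E}[X] \in \mathfrak{B}$, which is exactly where the classical ``explained/unexplained variance'' split must be reconciled with the $\ast$-structure.
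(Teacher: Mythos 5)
Your proposal is correct and follows essentially the same route as the paper's own proof: both pass through the alternative form (\ref{eq:cov_form}), use compatibility (CE4) to identify $\langle \mathfrak{E}[X^\ast Y]\rangle = \langle X^\ast Y\rangle$, and then rearrange so that the $\langle \mathfrak{E}[X]^\ast \mathfrak{E}[Y]\rangle$ terms cancel. The only difference is presentational --- you expand the ``between'' term explicitly where the paper says ``readily rearranged'' --- which makes your write-up slightly more complete but not a different argument.
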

\begin{proof}
This follows from repeated application of the compatibility property.
\begin{eqnarray*}
\langle \mathrm{Cov}_{\mathfrak{B}}  (X,Y) \rangle &=&
\langle X^\ast Y \rangle - \langle \mathfrak{E}[X]^\ast \mathfrak{E} [Y] \rangle \\
&=& \langle X^\ast Y \rangle - \langle X \rangle^\ast \langle Y \rangle 
- \big( \langle \mathfrak{E}[X]^\ast \mathfrak{E} [Y] \rangle
 - \langle X^\ast \rangle \langle Y \rangle ) \big) ,
\end{eqnarray*}
which is readily rearranged to give the result.
\end{proof}

As a consequence we have
\begin{eqnarray}
\mathrm{Var} (X) = 
\langle \mathrm{Var}_{\mathfrak{B}}  (X) \rangle +
\big\langle ( \mathfrak{E}[X] - \langle X \rangle )^\ast
 ( \mathfrak{E}[X] - \langle X \rangle ) \big\rangle. 
\end{eqnarray}

\subsection{Least Squares Property}

\begin{proposition}
\label{prop:least_squares_cond}
The conditional covariance has the least squares property, that is, $ \mathfrak{E} [ (X-B)^\ast (X-B) ]$ is minimized over $B \in \mathfrak{B}$ by $B=\mathfrak{E} [ X]$.
\end{proposition}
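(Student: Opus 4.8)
The plan is to complete the square in the non-commutative setting, exploiting the fact that $\mathfrak{E}[X]-B$ lies in $\mathfrak{B}$ whenever $B$ does. First I would write, for an arbitrary $B \in \mathfrak{B}$,
\begin{eqnarray*}
X - B = \delta X + C, \qquad C \triangleq \mathfrak{E}[X] - B \in \mathfrak{B},
\end{eqnarray*}
so that the quantity to be minimized expands as
\begin{eqnarray*}
(X-B)^\ast (X-B) = \delta X^\ast\, \delta X + \delta X^\ast\, C + C^\ast\, \delta X + C^\ast C.
\end{eqnarray*}

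Next I would apply $\mathfrak{E}$ and argue that the two cross terms vanish. Since $\mathfrak{E}$ is a $\ast$-map we have $\delta X^\ast = \delta (X^\ast)$, so Lemma \ref{lem:delta} gives $\mathfrak{E}[\delta X^\ast\, C] = \mathfrak{E}[\Eye\, \delta (X^\ast)\, C] = 0$ and likewise $\mathfrak{E}[C^\ast\, \delta X] = \mathfrak{E}[C^\ast\, \delta X\, \Eye] = 0$, each being of the peelable form $\mathfrak{E}[B_1\, \delta A\, B_2]$. What survives is
\begin{eqnarray*}
\mathfrak{E}[(X-B)^\ast (X-B)] = \mathrm{Var}_{\mathfrak{B}}(X) + \mathfrak{E}[C^\ast C] = \mathrm{Var}_{\mathfrak{B}}(X) + C^\ast C,
\end{eqnarray*}
where the last equality uses that $\mathfrak{E}$ restricts to the identity on $\mathfrak{B}$ (established earlier) together with $C^\ast C \in \mathfrak{B}$.

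Finally, since $C^\ast C \ge 0$ with equality precisely when $C = 0$, that is $B = \mathfrak{E}[X]$, I would conclude that
\begin{eqnarray*}
\mathfrak{E}[(X-B)^\ast (X-B)] \ge \mathrm{Var}_{\mathfrak{B}}(X),
\end{eqnarray*}
with the lower bound attained uniquely at $B = \mathfrak{E}[X]$, at which point the minimal value is exactly the conditional variance.

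The point to flag is the meaning of \lq\lq minimized\rq\rq: the expression is an element of $\mathfrak{B}$ rather than a scalar, so the assertion is an inequality in the operator (L\"owner) partial order. The argument works precisely because the residual term $C^\ast C$ is a positive operator, and because peelability is strong enough to annihilate the cross terms in spite of non-commutativity --- the very same mechanism already exploited in the alternative formula (\ref{eq:cov_form}) for the conditional covariance. I do not anticipate a genuine obstacle here; the only care needed is to keep the operator ordering (not scalar comparison) in view throughout.
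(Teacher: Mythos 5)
Your proof is correct and follows essentially the same route as the paper's: decompose $X-B$ into $\delta X$ plus an element of $\mathfrak{B}$, kill the cross terms via Lemma \ref{lem:delta}, and conclude from positivity of the remaining square. Your write-up is in fact slightly tighter than the paper's --- you make explicit that $\delta X^\ast = \delta(X^\ast)$ by the $\ast$-map property, that $\mathfrak{E}$ fixes $C^\ast C \in \mathfrak{B}$, that the minimum is attained uniquely, and that the comparison is in the operator order --- all points the paper leaves implicit.
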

\begin{proof}
Let $B \in \mathfrak{B}$ then $B^\prime = B+ \mathfrak{E} [X]$ which is in again in $\mathfrak{B}$. Then
\begin{eqnarray*}
\mathfrak{E} [ (X-B)^\ast (X-B) ] &=& \mathfrak{E} [ (\delta X-B^\prime )^\ast (\delta X-B^\prime ) ]
\\
 &=& \mathfrak{E}  [\delta X^\ast \, \delta X  ] -B^{\prime \ast} \delta X - \delta X \, B^\prime 
+ B^{\prime \ast } B^\prime  ] \\
&=& \mathrm{Var} _{\mathfrak{B}} (X) + \mathfrak{E} [ B^{\prime \ast} B^\prime ] \\
&\ge & \mathrm{Var}_{\mathfrak{B}} (X) , 
\end{eqnarray*}
where we use the positivity property.
\end{proof}

\begin{corollary}
The variance $ \langle (X-B)^\ast (X-B) \rangle $ is also minimized over $B \in \mathfrak{B}$ by $B=\mathfrak{E} [ X]$.
\end{corollary}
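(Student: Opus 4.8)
The plan is to deduce this directly from the conditional least-squares statement of Proposition~\ref{prop:least_squares_cond} by applying the state, so that no fresh computation is needed beyond the operator-level argument already given. The key link is the compatibility property (CE4): for any $C \in \mathfrak{A}$ one has $\langle \mathfrak{E}[C] \rangle = \langle C \rangle$, so taking $C = (X-B)^\ast (X-B)$ identifies the scalar variance with the state of the conditional quantity,
\begin{eqnarray*}
\langle (X-B)^\ast (X-B) \rangle = \big\langle \mathfrak{E} [ (X-B)^\ast (X-B) ] \big\rangle .
\end{eqnarray*}

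First I would invoke the operator inequality established inside the proof of Proposition~\ref{prop:least_squares_cond}, namely $\mathfrak{E}[(X-B)^\ast(X-B)] \ge \mathrm{Var}_{\mathfrak{B}}(X)$ for every $B \in \mathfrak{B}$, with equality when $B = \mathfrak{E}[X]$. Since the state is positive and $*$-linear it preserves operator inequalities, so applying $\langle \cdot \rangle$ to both sides and using the identification above yields $\langle (X-B)^\ast (X-B) \rangle \ge \langle \mathrm{Var}_{\mathfrak{B}}(X) \rangle$. The right-hand side is a fixed scalar independent of $B$, and the bound is saturated by $B = \mathfrak{E}[X]$, which establishes the corollary. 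Equivalently, writing $B^\prime = B - \mathfrak{E}[X] \in \mathfrak{B}$ one finds, after a second use of compatibility on the cross term, that
\begin{eqnarray*}
\langle (X-B)^\ast (X-B) \rangle = \langle \mathrm{Var}_{\mathfrak{B}}(X) \rangle + \langle B^{\prime \ast} B^\prime \rangle ,
\end{eqnarray*}
displaying the excess explicitly as a non-negative term.

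There is essentially no serious obstacle, since positivity of the state transports the operator-level inequality straight down to the scalar level. The only point worth flagging is that the corollary asserts only that the minimum is attained at $B = \mathfrak{E}[X]$, not that the minimizer is unique; uniqueness would require the state to be faithful, so that $\langle B^{\prime \ast} B^\prime \rangle = 0$ forces $B^\prime = 0$. As no uniqueness is claimed, mere positivity of the state suffices.
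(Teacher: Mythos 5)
Your proof is correct, but it reaches the conclusion by a genuinely different route from the paper's. The paper simply re-runs the computation of Proposition~\ref{prop:least_squares_cond} at the scalar level: writing $X-B=\delta X-B'$, it expands $\langle (\delta X-B')^\ast(\delta X-B')\rangle$, kills the cross terms $\langle B'^\ast\,\delta X\rangle=\langle \delta X^\ast B'\rangle=0$ via Lemma~\ref{lem:delta} (together, implicitly, with compatibility), and reads off $\langle \delta X^\ast\delta X\rangle+\langle B'^\ast B'\rangle$ --- which is precisely the ``equivalent'' identity you record at the end of your proposal. Your primary argument instead treats the statement as a true corollary: you take the operator inequality $\mathfrak{E}[(X-B)^\ast(X-B)]\ge \mathrm{Var}_{\mathfrak{B}}(X)$, with its equality case at $B=\mathfrak{E}[X]$, already established inside the proposition's proof, and transport it to scalars using compatibility (CE4) together with positivity (hence monotonicity) of the state. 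What your route buys is economy and a clean general principle: any operator-level least-squares bound descends to every state, with no fresh expansion needed. What the paper's route buys is self-containedness: it never needs to look inside the proposition's proof, whereas your argument could not run from the proposition's bare statement alone (``minimized at $B=\mathfrak{E}[X]$'' does not by itself give the uniform lower bound and its saturation). Two incidental merits of your write-up: you define $B'=B-\mathfrak{E}[X]$, which is what the decomposition $X-B=\delta X-B'$ actually requires (the paper's $B'=B+\mathfrak{E}[X]$ is a typo), and your closing observation that uniqueness of the minimizer would require faithfulness of the state is correct and goes beyond what is claimed.
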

\begin{proof}
Using the same notations from the proof of Lemma \ref{prop:least_squares_cond}, we have
\begin{eqnarray*}
\langle (X-B)^\ast (X-B) \rangle &=& \langle (\delta X-B^\prime )^\ast (\delta X-B^\prime ) \rangle
\\
 &=& \langle \delta X^\ast \delta X  \rangle - \langle B^{\prime \ast} \delta X \rangle - \langle \delta X^\ast \, B^\prime \rangle 
+ \langle  B^{\prime \ast } B^\prime  \rangle  \\
&=& \langle \delta X^\ast \delta X  \rangle +  \langle  B^{\prime \ast } B^\prime  \rangle , 
\end{eqnarray*}
since $\langle B^{\prime \ast} \delta X \rangle = \langle \delta X^\ast \, B^\prime \rangle =0$ by Lemma \ref{lem:delta}. Therefore $ \langle (X-B)^\ast (X-B) \rangle $ is also minimized over $B \in \mathfrak{B}$ by $B=\mathfrak{E} [ X]$.
\end{proof}
 
%%%%%%%%%%%%%%%%%%%%%%%%%%%%%%%%%%%%%%%%%%%%%%%%%%%%%%%%%%%%%%%%%%%%%%%%%%%%%%%%%%
\section{Classical Filtering}
In this section we recall in detail Kolmogorov’s Theory of Probability.  In the process we will see the commutative analogues that motivated the our more general definitions in the Introduction.

\subsection{Kolmogorov's Theory}
Kolmogorov's axiomatic formulation of probability theory is based on the mathematical formalism of measure theory. The main concept is that of a \textit{probability space}.  This is a triple $(\Omega, \mathcal{F} , \mathbb{P})$ where:

\begin{itemize}
\item $\Omega$, called the \textit{sample space}, is the collection of all possible outcomes (typically a topological space);
\item $\mathcal{F}$ is a $\sigma$-algebra of subsets of $\Omega$,the elements of which are known as \textit{events};
\item 
$\mathbb{P}$ is a \textit{probability measure} on $\mathcal{F}$.
\end{itemize}

In details, $\mathcal{F}$ will form a $\sigma$-algebra if it contains the empty set $\emptyset$, if it is closed under complementation (that is, if $A \in \mathcal{F}$ then so too will be its complement $A^\prime = \{ \omega \in \Omega: \omega \notin A \}$), and finally if whenever $\{ A_n \}$ is a countable number of events in $\mathcal{F}$ then their intersection $\cap_n A_n$ and union $\cup_n A_n$ will be in $\mathcal{F}$. Note that $\Omega $ will be an event since it is the complement of the empty set.

A probability measure $\mathbb{P}$ on $\mathcal{F}$ is an assignment of a probability $\mathbb{P} [A] \ge 0$ to each event $A \in \mathcal{F}$ with the rule that $\mathbb{P} [ \Omega] =1$ and $ \mathbb{P} [ \cap_n A_n ] = \sum_n \mathbb{P} [A_n]$ for any countable number of events, $\{ A_n \}$, that are non-overlapping (i.e., $A_n \cap A_m = \emptyset$ if $n \neq m$)

The pair $(\Omega , \mathcal{F} )$ comprise a \textit{measurable space}. In other words, a space where we are capable to assign possible measures of size to selected subsets in a consistent manner: this is the branch of mathematics known as \textit{measure theory} which was set up to resolve pathological problems when you try and assign a measure to all subsets. It follows that probability theory is formally just special case of measure theory where the measure $\mathbb{P}$ has maximum value $\mathbb{P} [\Omega]=1$. 

More exactly, the setting is measure theory but probability theory brings its own additionally concepts with it. An example is \textit{conditional probability}: the probability of event $A$ given that $B$ has occurred is defined by
\begin{eqnarray}
\mathbb{P} [A|B ] = \frac{  \mathbb{P} [ A \cap B ]}{ \mathbb{P} [B] }
\end{eqnarray}
which is the joint probability, $ \mathbb{P} [ A \cap B ]$, for both $A$ and $B$ to occur divided by the marginal probability $\mathbb{P} [B]$.

The choice of $\mathcal{F}$ in a given problem is part of the modeling process. Essentially, we have to ask what are the events that we want to assign a probability to. Let $\mathcal{G}$ be a $\sigma$-algebra that is contained in $\mathcal{F}$ (that is every event in $\mathcal{G}$  there is also an event in $\mathcal{F}$) then we say that $\mathcal{G}$ is \textit{coarser, or smaller}, than $\mathcal{F}$. The probability space $( \Omega , \mathcal{G} , \mathbb{Q})$ is then a \textit{coarse-graining} of $(\Omega , \mathcal{G} , \mathbb{P})$ where we take $\mathbb{Q}$ to be the restriction of $\mathbb{P}$ to the smaller $\sigma$-algebra $\mathcal{G}$.

Just as we do not consider all subsets of $\Omega$, we do not consider all functions on $\Omega$ either. Let $X: \Omega \to \mathbb{R}$ then we say $X$ is measurable with respect to a $\sigma$-algebra $\mathcal{F}$ if the sets 
\begin{eqnarray}
X^{-1} [I] \triangleq \{ \omega \in \Omega : X(\omega ) \in I \}
\end{eqnarray}
belong to $\mathcal{F}$ for each interval $I$. A measurable function $X$ on a probability space is called a \textit{random variable} and the probability that it takes a value in the interval $I$, denoted $\mathrm{Prob} \{ X \in I\} $ is just the value $\mathbb{P}$ assigns to the event $X ^ {-1} [I]$.
We will use the term \textit{random vector} for a vector-valued function whose components are all random variables.

Let $X_1 , \cdots , X_n$ be random variables, then there is a coarsest $\sigma$-algebra which contains all the events of the form $X_j^{-1}[I]$ for all $j$ and all intervals $I$: we refer to this as the $\sigma$-\textit{algebra generated by the random variables}.

The correct way to think of an ensemble is a probability space where $(\Omega ,\mathcal{F})$ is collection $\Gamma$ all possible microstates with $\mathcal{F}$ is some suitable $\sigma$-algebra, and $\mathbb{P}$ is a suitable probability measure. The Hamiltonian must, at the very least, be a measurable function with respect to whatever $\sigma$-algebra we chose. No philosophical interpretations needed beyond this point.

\subsection{Conditioning in Classical Probability}
 
We will now restrict attention to continuous random variables with well-defined probability densities.
A random variable $X$ has probability distribution function (pdf) $\rho _{X}$
so that 
\begin{eqnarray}
\Pr \left\{ x\leq X<x+dx\right\} =\rho _{X}\left( x\right) \,dx.
\end{eqnarray}
Normalization requires $\int_{-\infty }^{\infty }\rho _{X}\left( x\right)
dx=1$. If we have several random variables, then we need to specify their 
\textit{joint probability}. For instance, if we have a pair $X$ and $Y$ then
their joint pdf will be $\rho _{X,Y}\left( x,y\right) $ with 
\begin{eqnarray}
\rho _{X}\left( x\right) &=&\int \rho _{X,Y}\left( x,y\right) dy,\quad \mathrm{%
(}x\mathrm{-marginal)} \\
\rho _{Y}\left( y\right) &=&\int \rho _{X,Y}\left( x,y\right) dx,\quad \mathrm{%
(}y\mathrm{-marginal)}
\end{eqnarray}
and $ 1=\int \int \rho _{X,Y}\left( x,y\right) dxdy$.

We say that $X$ and $Y$ are \textit{statistically independent} if their
joint probability factors into the marginals 
\begin{eqnarray}
\rho _{X,Y}\left( x,y\right) \equiv \rho _{X}\left( x\right) \times \rho
_{Y}\left( y\right) ,\quad \text{( independence).}
\end{eqnarray}
This is equivalent to pairs of events of the form $X^{-1} [I]$ and $Y^{-1} [J]$ being statistically independent for all intervals $I,J$.

More generally, we can work out the conditional probabilities from a joint
probability. The pdf for $X$ given that $Y=y$ is defined to be 
\begin{eqnarray}
\rho _{X|Y}\left( x|y\right) \triangleq \frac{\rho _{X,Y}\left( x,y\right) }{%
\rho _{Y}\left( y\right) }.
\end{eqnarray}
In the special case where $X$ and $Y$ are independent we have 
\begin{eqnarray}
\rho _{X|Y}\left( x|y\right) =\rho _{X}\left( x\right) .
\end{eqnarray}
In other words, conditioning on the fact that $Y=y$ makes no change to our
knowledge of $X$.

\begin{definition}
Let $A = a (X,Y)$ be a random variable for some function $a: \mathbb{R}  \times \mathbb{R} \mapsto \mathbb{R}$, then its conditional expectation given $Y=y$ is defined to be
\begin{eqnarray}
\mathbb{E} [ A | Y=y  ] \triangleq \int_{\mathbb{R}} a (x , y ) \rho_{X|Y} (x | y) dx .
\end{eqnarray}
More generally, let $\mathcal{Y} $ be the $\sigma$-algebra generated by $Y$, then $\mathbb{E} [ A | \mathcal{Y}]$ is the $\mathcal{Y}$-measurable random variable taking the value $\mathbb{E} [ A | Y=y  ]$ for each $\omega$ where $y$ is the value of $Y (\omega )$.
\end{definition}

As $\int \rho_{X|Y} (x|y) \, dx =1$, we have
\begin{eqnarray}
\mathbb{E} [ 1 | \mathcal{Y}   ] \equiv 1 .
\end{eqnarray}
We note that for any random variable $A=a(X,Y)$ 
\begin{eqnarray}
\mathbb{E} \big[ \mathbb{E} [ A | \mathcal{Y} ] \big] &=& \int_{\mathbb{R}} \left(
\int_{\mathbb{R}} a (x , y )  \rho_{X|Y} (x | y) dx   \right) 
\rho_Y (x) \, dy
\nonumber \\
&=&
\int_{\mathbb{R}}dx
\int_{\mathbb{R}} d y \, a (x , y )  \rho_{X,Y} (x , y) dx   \nonumber \\
&=&
\mathbb{E}[A] .
\end{eqnarray}
Also, for any $A=a(X,Y)$ and $B=b(Y)$ we have
\begin{eqnarray}
 \mathbb{E} [ A B | \mathcal{Y} ] (\omega ) &=& 
\int_{\mathbb{R}} a (x , Y(\omega )  b(Y(\omega )) \rho_{X|Y} (x | y) dx   
\nonumber \\
&=&
\left(
\int_{\mathbb{R}}dx
\, a (x , Y(\omega ) )  \rho_{X|Y} (x | Y(\omega )) dx  \right) b (Y(\omega )) \nonumber \\
&=&
 \mathbb{E}[A| \mathcal{Y}] (\omega ) \, B (\omega ) .
\end{eqnarray}

This construction was specific to random variables with pdfs. However, it extends to the general setting as follows.

\begin{theorem}
Let $(\Omega , \mathcal{F} , \mathbb{P} )$ be a probability space and let $\mathcal{Y}$ be a sub-$\sigma$-algebra of $\mathcal{F}$. Then there exists a $\mathbb{P}$-almost surely unique  $\mathcal{Y}$-measurable random variable $\mathbb{E} [ X | \mathcal{Y} ]$ such that $\mathbb{E} [ 1 | \mathcal{Y}   ] =1$, $\mathbb{E} \big[ \mathbb{E} [ A | \mathcal{Y} ] \big] =\mathbb{E} [A]$ and $\mathbb{E} [ AB | \mathcal{Y} ] = \mathbb{E} [ A | \mathcal{Y} ] \, B$ whenever $B$ is $\mathcal{Y}$-measurable.
\end{theorem}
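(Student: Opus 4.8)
The plan is to construct $\mathbb{E}[A|\mathcal{Y}]$ as a Radon--Nikodym derivative and then read off the three listed properties from its defining relation. First I would isolate the single characterizing property that drives everything: a $\mathcal{Y}$-measurable random variable $Z$ equals $\mathbb{E}[A|\mathcal{Y}]$ precisely when $\int_G Z \, d\mathbb{P} = \int_G A \, d\mathbb{P}$ for every $G \in \mathcal{Y}$. To produce such a $Z$, fix $A$ (first assuming $A \ge 0$, the general case following by splitting into positive and negative parts and into real and imaginary components) and define the set function $\mu_A (G) \triangleq \int_G A \, d\mathbb{P}$ for $G \in \mathcal{Y}$. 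One checks that $\mu_A$ is a finite signed measure on the measurable space $(\Omega , \mathcal{Y})$ and that it is absolutely continuous with respect to the restriction $\mathbb{P}|_{\mathcal{Y}}$, since $\mathbb{P}(G)=0$ forces $\int_G A \, d\mathbb{P} = 0$. The Radon--Nikodym theorem then yields a $\mathcal{Y}$-measurable density, which we name $\mathbb{E}[A|\mathcal{Y}]$, satisfying the characterizing relation by construction.

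Uniqueness I would dispatch next, as it also underpins the verification of the three properties. If $Z_1$ and $Z_2$ are $\mathcal{Y}$-measurable and both satisfy the characterizing relation, then $\int_G (Z_1 - Z_2) \, d\mathbb{P} = 0$ for all $G \in \mathcal{Y}$; choosing $G = \{ Z_1 > Z_2 \}$ and $G = \{ Z_1 < Z_2 \}$, both of which lie in $\mathcal{Y}$, forces $Z_1 = Z_2$ $\mathbb{P}$-almost surely. With uniqueness in hand, the first two assertions are immediate: the constant $1$ is $\mathcal{Y}$-measurable and trivially satisfies the characterizing relation for $A=1$, so $\mathbb{E}[1|\mathcal{Y}] = 1$; and taking $G = \Omega$ in the characterizing relation gives $\mathbb{E}\big[ \mathbb{E}[A|\mathcal{Y}] \big] = \mathbb{E}[A]$.

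The pull-out property $\mathbb{E}[AB|\mathcal{Y}] = \mathbb{E}[A|\mathcal{Y}] \, B$ is the one step that requires genuine work, and it is where I expect the main effort to lie. The strategy is the standard bootstrap from indicators. For $B = 1_{G_0}$ with $G_0 \in \mathcal{Y}$, I would verify that $\mathbb{E}[A|\mathcal{Y}] \, 1_{G_0}$ is $\mathcal{Y}$-measurable and satisfies the characterizing relation for the random variable $A \, 1_{G_0}$, namely $\int_G \mathbb{E}[A|\mathcal{Y}] \, 1_{G_0} \, d\mathbb{P} = \int_{G \cap G_0} A \, d\mathbb{P}$ for every $G \in \mathcal{Y}$; uniqueness then identifies it with $\mathbb{E}[A \, 1_{G_0} | \mathcal{Y}]$. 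Linearity extends this to $\mathcal{Y}$-measurable simple functions $B$, and a monotone (or dominated) convergence argument extends it to general bounded $\mathcal{Y}$-measurable $B$, completing the proof. The only delicate point is the interchange of limit and integral in this final passage, which is controlled by the integrability hypotheses.
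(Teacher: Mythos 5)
Your proposal is correct and follows essentially the same route as the paper: the paper states this theorem without giving an in-place proof, but its earlier \emph{Existence} paragraph sketches exactly your construction, setting $\mu_A [I] = \int_I A(\omega )\, \mathbb{P}[d\omega ]$ on the sub-$\sigma$-algebra, noting absolute continuity with respect to the restricted measure, and taking the Radon--Nikodym derivative as the conditional expectation. Your write-up simply supplies the details the paper leaves implicit --- the almost-sure uniqueness argument and the indicator-to-simple-to-bounded bootstrap for the pull-out property --- so it stands as a legitimate completion of the paper's sketch.
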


\begin{proposition}
If $B$ is $\mathcal{Y}$-measurable, then 
\begin{eqnarray}
\mathbb{E} [ B | \mathcal{Y} ] = B .
\end{eqnarray}
\end{proposition}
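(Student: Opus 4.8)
The plan is to deduce this as an immediate consequence of the defining properties of the classical conditional expectation established in the preceding theorem, exactly paralleling the proof of the analogous quantum proposition earlier in the excerpt. Since $\mathcal{Y}$-measurability of $B$ is precisely the hypothesis we are given, I would exploit the pull-out property $\mathbb{E}[AB \mid \mathcal{Y}] = \mathbb{E}[A \mid \mathcal{Y}]\,B$ together with the normalization $\mathbb{E}[1 \mid \mathcal{Y}] = 1$.

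Concretely, I would set $A = 1$ in the pull-out identity. First I would observe that $1$ is trivially a random variable on $(\Omega, \mathcal{F}, \mathbb{P})$, so the theorem applies with this choice. The product $AB = 1 \cdot B = B$, and since $B$ is $\mathcal{Y}$-measurable by assumption, the pull-out property gives
\begin{eqnarray*}
\mathbb{E}[B \mid \mathcal{Y}] = \mathbb{E}[1 \cdot B \mid \mathcal{Y}] = \mathbb{E}[1 \mid \mathcal{Y}]\, B.
\end{eqnarray*}
Then I would invoke $\mathbb{E}[1 \mid \mathcal{Y}] = 1$ to conclude that the right-hand side equals $B$, which is the desired identity. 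This mirrors the earlier quantum argument where conservativity $\mathfrak{E}[\Eye] = \Eye$ plays the role that normalization plays here, and peelability plays the role of the pull-out property.

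There is essentially no serious obstacle, since every ingredient is handed to us by the theorem; the only point worth flagging is that all of these identities hold only $\mathbb{P}$-almost surely, so the conclusion $\mathbb{E}[B \mid \mathcal{Y}] = B$ is itself an almost-sure equality, consistent with the almost-sure uniqueness asserted in the theorem. One could alternatively characterize $\mathbb{E}[B \mid \mathcal{Y}]$ through its defining averaging property against bounded $\mathcal{Y}$-measurable test functions and verify that $B$ satisfies it, but the direct substitution $A = 1$ is cleaner and matches the style of the quantum proposition it generalizes.
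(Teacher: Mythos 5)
Your proof is correct and is essentially identical to the paper's: both set $A = 1$ in the pull-out identity $\mathbb{E}[AB \mid \mathcal{Y}] = \mathbb{E}[A \mid \mathcal{Y}]\,B$ and then apply $\mathbb{E}[1 \mid \mathcal{Y}] = 1$. Your added remark that the equality holds only $\mathbb{P}$-almost surely is a fair refinement, but the argument is the same.
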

\begin{proof} Setting $A=1$ in the identity $\mathbb{E} [ AB | \mathcal{Y} ] = \mathbb{E} [ A | \mathcal{Y} ] \, B$ whenever $B$ is $\mathcal{Y}$-measurable, we see that $\mathbb{E} [ B | \mathcal{Y} ] =\mathbb{E} [ 1 | \mathcal{Y} ] B$ which in turn equals $B$. 
\end{proof}

\begin{proposition}
Conditional expectations are projections.
\end{proposition}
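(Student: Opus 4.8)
The plan is to unpack the word \emph{projection} as idempotency: I want to show that applying the conditional expectation twice produces the same result as applying it once. Concretely, for any random variable $A$ I aim to establish
\begin{eqnarray}
\mathbb{E} \big[ \mathbb{E} [ A | \mathcal{Y} ] \, \big| \, \mathcal{Y} \big] = \mathbb{E} [ A | \mathcal{Y} ] .
\end{eqnarray}

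First I would recall from the existence theorem that $\mathbb{E}[A|\mathcal{Y}]$ is, by construction, a $\mathcal{Y}$-measurable random variable. This is the crucial structural fact: the output of the conditioning operation always lands in the class of objects on which conditioning acts trivially. With this in hand, the argument reduces to a single invocation of the immediately preceding proposition, which asserts that every $\mathcal{Y}$-measurable $B$ satisfies $\mathbb{E}[B|\mathcal{Y}] = B$.

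The proof is then completed by the substitution $B = \mathbb{E}[A|\mathcal{Y}]$. Since this $B$ is $\mathcal{Y}$-measurable, the preceding proposition applies verbatim and delivers $\mathbb{E}[B|\mathcal{Y}] = B$, which is exactly the claimed idempotency. I expect there to be no genuine obstacle here; the only point deserving care is confirming that $\mathbb{E}[A|\mathcal{Y}]$ really does satisfy the measurability hypothesis of the preceding proposition, and this is immediate from the defining theorem. This result is the commutative counterpart of the projectivity axiom (CE5) listed earlier for quantum conditional expectations, and its brevity here foreshadows that the substantive content in the noncommutative setting lies not in projectivity but in the existence question governed by Takesaki's theorem.
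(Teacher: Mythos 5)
Your proof is correct and follows exactly the paper's own argument: both set $B = \mathbb{E}[A|\mathcal{Y}]$, note its $\mathcal{Y}$-measurability, and invoke the preceding proposition $\mathbb{E}[B|\mathcal{Y}]=B$ to conclude idempotency. No gaps and no meaningful difference in route.
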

\begin{proof}  
For $A$ arbitrary, we set $B = \mathbb{E} [ A | \mathcal{Y} ]$ which is $\mathcal{Y}$-measurable and so
\begin{eqnarray}
\mathbb{E} [ \mathbb{E} [ A | \mathcal{Y} ] | \mathcal{Y} ] = \mathbb{E} [ A | \mathcal{Y} ].
\end{eqnarray}
\end{proof}

%%%%%%%%%%%%%%%%%%%%%%%%%%%%%%%%%%%%%%%%%%%%%%%%%%%%%%%%%%%%%%%%%%%
\subsection{Classical Measurement}
\label{sec:CM}

We now suppose that we have a system with phase space $\Gamma$ and a measuring apparatus with parameter space $M$. We let $x$ denote the phase points of $\Gamma$ as before, and write $y$ for the variables of the apparatus. The components of $y$ are sometimes referred to as \textit{pointer variables}. The total space will be $\Omega = \Gamma \times M$ with coordinates $\omega = (x,y)$. We take $\mathbb{P}$ to be a probability measure on $\Omega$ and consider the random vectors $X(\omega ) = x$ and $Y(\omega ) = y$.

In an experiment, we will not measure the system directly but instead record the value of one or more pointer variables. Let $\mathcal{Y}$ be the $\sigma$-algebra generated by $Y$. We therefore refer to $Y$ as \textit{the data}.

We shall assume that the system variables and the pointer variables are statistically dependent for our probability measure $\mathbb{P}$, otherwise we learn nothing about our system from the data. As before we assume a joint pdf $\rho (x,y)$ with marginals $\rho_\Gamma (x)$ for the system and $\rho_M (y)$ for the measuring apparatus. We will write $\rho (x|y)$ for the conditional pdf for our system given the data but write $\lambda (y |x)$ for the conditional pdf of the data given the system. This implies that
\begin{eqnarray}
\rho (x,y) = \rho( x | y ) \, \rho_M (y) = \lambda (y | x ) \, \rho_\Gamma (x) .
\end{eqnarray}

In practice, we may not know $\mathbb{P}$ however we will assume that we know $\lambda ( y | x)$. That is, we assume that we know the probability distribution of the pointer variables if we prepared our system precisely in state $x$, for each possible $x \in \Gamma$.
Statisticians refer to $\lambda \left( y|x\right) $ as the \textit{likelihood function} of the data $y$ given $x$.

Note that 
\begin{eqnarray}
\int _M \lambda \left( y|x\right) \, dy = \int_\Gamma \rho (y | x ) \, dx =1.
\end{eqnarray}

\bigskip

Now every random variable may be written as $A = a (X,Y)$ for some function $a: \Omega = \Gamma \times M \mapsto \mathbb{R}$. Its conditional expectation given the data is
\begin{eqnarray}
\mathbb{E} [ A | \mathcal{Y} ] \equiv\int_\Gamma a (x , Y ) \rho (x | Y) dx .
\end{eqnarray}
Indeed, for $\omega = (x,y)$ we have
\begin{eqnarray}
\mathbb{E} [ A | \mathcal{Y} ] (\omega ) &=& \frac{1}{ \rho_M (y)} \int_\Gamma a (x' , y ) \rho (x ', y) dx' \nonumber \\
                &=& \frac{  \int_\Gamma a (x' , y ) \rho (x ', y) dx'}{\int_\Gamma  \rho (x'', y) dx''} .
\label{eq:CEy}
\end{eqnarray}

This is an average over the hypersurface $\Omega_y =\{ \omega \in \Omega  : Y (\omega ) =y \}$. Indeed, the decomposition $\omega = (x,y)$ can be thought of as split into the constraint coordinates $y$ and the hypersurface coordinates $x$. 

\bigskip

From a practical stand point, we will have access only to the data - that is, variables measurable with respect to $\mathcal{Y}$ only. We are assuming that we know $\lambda $, which is the conditional probability for data given that the system. However, the problem is that the system is unknown and what we are given is, of course, the data. Therefore, we need to solve the inverse problem, namely to give the conditional probability for the unknown $X$ given the measured values for $Y$. The problem however is not well-posed. We do not have enough information in
the problem yet to write down the joint probability. 

To remedy this, we introduce a pdf for $X$ which is our \textit{a priori} guess: 
\begin{eqnarray}
\rho_X (x) \stackrel{\mathrm{guess!}}{=}
\rho _{\mathrm{prior}}\left( x\right) .
\end{eqnarray}
We then have the corresponding joint probability for $X$ and $Y$: 
\begin{eqnarray}
\rho _{\mathrm{prior}}\left( x,y\right) =\lambda \left( y|x\right) \times \rho
_{\mathrm{prior}}\left( x\right) .
\end{eqnarray}
If we subsequently measure $Y=y$ then we obtain the \textit{a posteriori} probability 
\begin{eqnarray}
\rho _{\mathrm{post}}\left( x|y\right)  &=&\frac{\rho _{X,Y}\left( x,y\right) 
}{\rho _{Y}\left( y\right) } \nonumber \\
&=&\frac{{\lambda \left( y|x\right) \rho _{\mathrm{prior}}\left( x\right) }}{%
\int \lambda \left( y|x^{\prime }\right) \rho _{\mathrm{prior}}\left(
x^{\prime }\right) dx^{\prime }}.
\end{eqnarray}

The conditional expectation in (\ref{eq:CEy}) can be then written as
\begin{eqnarray}
\mathbb{E} [ A | \mathcal{Y} ] (\omega )=
\frac{  
\int_\Gamma a (x' , y )  \lambda \left( y|x^{\prime }\right) \rho _{\mathrm{prior}}\left(
x^{\prime }\right) dx^{\prime }
}
{\int_\Gamma  \lambda \left( y|x^{\prime \prime}\right) \rho _{\mathrm{prior}}\left(
x^{\prime \prime}\right) dx^{\prime \prime }} .
\label{eq:CE_likelihood}
\end{eqnarray}

\begin{example}
\label{ex:s+n}
Let $X$ be the position of a particle. We measure 
\begin{eqnarray}
Y=X+\sigma Z
\end{eqnarray}
where $Z$ is a standard normal variable independent of $X$. We may refer to $X$ as the signal and $Z$ as the noise. 

Now if $X$ was known to be exactly $x$ then $Y$ will be normal with mean $x$ and variance $\sigma ^{2}$. Therefore, we can immediately write down the likelihood function: it is 
\begin{eqnarray}
\lambda \left( y|x\right) =\frac{1}{\sqrt{2\pi }\sigma }e^{-\left(
y-x\right) ^{2}/2\sigma ^{2}},
\end{eqnarray}
\begin{eqnarray}
\rho _{\mathrm{post}}\left( x|y\right) =\frac{\rho _{\mathrm{prior}}\left(
x\right) e^{-\left( y-x\right) ^{2}/2\sigma ^{2}}}{\int \rho _{\mathrm{prior}%
}\left( x^{\prime }\right) e^{-\left( y-x^{\prime }\right) ^{2}/2\sigma
^{2}}dx^{\prime }}.
\end{eqnarray}
In the special case where $X$ is assumed to be Gaussian, say mean $\mu _{0}$
and variance $\sigma _{0}^{2}$, we can give the explicit form of the
posterior as Gaussian with mean $\mu _{1}$ and variance $\sigma _{0}^{2}$
where 
\begin{eqnarray}
\mu _{1} &=&\frac{\sigma _{1}^{2}}{\sigma _{0}^{2}}\mu _{0}+\frac{\sigma
_{1}^{2}}{\sigma ^{2}}y \\
\frac{1}{\sigma _{1}^{2}} &=&\frac{1}{\sigma _{0}^{2}}+\frac{1}{\sigma ^{2}}.
\end{eqnarray}
There are two desirable features here. First, the new mean $\mu_1$ uses the data $y$. Second, the new variance $\sigma_1^2$ is smaller than the prior variance $\sigma_0^2$. In other words, the measurement is informative and decreases uncertainty in the state
\end{example}

\subsection{Classical Filtering}
It is possible to extend the conditioning problem to estimate the state of a dynamical system as it evolves in time based on continual monitoring. This involves the theory of stochastic processes and we will use the informal language of path integrals rather than the Ito calculus. 
\subsubsection{Stochastic Process}
A \textit{stochastic process} is a family, $\left\{ X\left( t\right) :t\geq
0\right\} $, of random variables labeled by time. The process is determined
by specifying all the multi-time distributions 
\begin{eqnarray}
\rho \left( x_{n},t_{n};\cdots ;x_{1},t_{1}\right)
\end{eqnarray}
for $X\left( t_{1}\right) =x_{1},\cdots ,X\left( t_{n}\right) =x_{n}$ for
each $n\geq 0$.

\bigskip

A stochastic process is said to be \textit{Markov} if the multi-time
distributions take the form 
\begin{eqnarray}
\rho \left( x_{n},t_{n};\cdots ;x_{1},t_{1}\right) = T ( x_n , t_n | x_{n-1}
, t_{n-1} ) \cdots T (x_2 , t_2 | x_1 , t_1 ) \, \rho (x_1 , t_1) ,
\end{eqnarray}
where whenever $t_n > t_{n-1} > \cdots > t_1$.

Here $T(x,t|x_{0},t_{0})$ is the probability density for $X(t)=x$ given that 
$X(t_{0})=x_{0}$, ($t>t_{0}$). 
\begin{eqnarray}
\text{Prob} \big\{
x \le X(t) \le x + dx | X(t_0) =x_0 \big\}
= T (x , t | x_0 , t_0 ) \, dx ,
\end{eqnarray}
for $t > t_0$.
It is called the \textit{transition mechanism}
of the Markov process. For consistency we should have the following propagation rule, known as the Chapman-Kolmogorov equation in probability theory,
\begin{eqnarray}
\int T( x, t | x_1 , t_1 )
 \, T (x_1 , t_1 | x_0 , t_0 ) \, dx_1
= T(x,t | x_0 , t_0 ) ,
\end{eqnarray}
for all $t > t_1 > t_0$.

\bigskip
\begin{example}
The Wiener process (Brownian motion) is determined by 
\begin{eqnarray}
T\left( x,t|x_{0},t_{0}\right) &=&\frac{1}{\sqrt{2\pi \left( t-t_{0}\right) }%
}e^{-\frac{\left( x-x_{0}\right) ^{2}}{2\left( t-t_{0}\right) }}, \\
\rho \left( x,0\right) &=&\delta _{0}\left( x\right) .
\end{eqnarray}
The transition mechanism here is the Green's function for the heat equation
\begin{eqnarray}
\frac{\partial}{\partial t} \rho = \frac{1}{2}
\frac{\partial^2}{\partial x^2} \rho .
\end{eqnarray}
(In other words, given the data $\rho ( \cdot, t_0) = f(\cdot )$ at time $t_0$, the solution for later times is $\rho (x,t) = \int T(x,t | x_0 , t_0 ) f (x_0) \, dx_0$.)

Norbert Wiener gave an explicit construction - known as the canonical version of Brownian motion, where the sample space is the space of continuous paths, $\mathbf{w}=\left\{ w\left(
t\right) :t\geq 0\right\} $, starting a the origin as sample space, with a suitable $\sigma$-algebra of subsets and
a well defined measure $\mathbb{P}_{\text{Wiener}}^{t}$.

The corresponding stochastic process is denote $W(t)$. Ito was able to construct a stochastic differential calculus around the Wiener process, and more generally diffusions, and we have the following Ito table
\begin{eqnarray}
\begin{tabular}{l|ll}
$\times $ & $dt$ & $dW$ \\ \hline
$dt$ & 0 & 0 \\ 
$dW$ & 0 &  $dt$ 
\end{tabular}.
\end{eqnarray}
\end{example}

\subsubsection{Path Integral Formulation}

Indeed, we have 
\begin{eqnarray}
\rho \left( x_{n},t_{n};\cdots ;x_{1},t_{1}\right) \,dx_{n}\cdots
dx_{1}\propto e^{-\sum_{k}\frac{\left( x_{k}-x_{k-1}\right) ^{2}}{2\left(
t_{k}-t_{k-1}\right) }}dx_{n}\cdots dx_{1}.
\end{eqnarray}
Formally, we may introduce a limit ``path integral'' with probability
measure on the space of paths 
\begin{eqnarray}
\mathbb{P}_{\text{Wiener}}^{t}\left[ d\mathbf{w}\right] =e^{-S_{\text{Wiener}%
}\left[ \mathbf{w}\right] }\mathcal{D}\mathbf{w}.
\end{eqnarray}
where we have the action 
\begin{eqnarray}
S_{\text{Wiener}}\left[ \mathbf{w}\right] =\int_{0}^{t}\frac{1}{2}\dot{w}%
\left( \tau \right) ^{2}d\tau .
\end{eqnarray}

For a diffusion $X\left( t\right) $ satisfying the Ito stochastic differential equation
\begin{eqnarray}
dX=v\left( X\right) dt+\sigma \left( X\right) dW
\end{eqnarray}
we have the corresponding measure 
\begin{eqnarray}
\mathbb{P}_{X}^{t}\left[ d\mathbf{x}\right] =e^{-S_{X}\left[ \mathbf{x}%
\right] }\mathcal{D}\mathbf{x}.
\end{eqnarray}
where we have the action (substitute $\dot{w}=\frac{\dot{x}-w}{\sigma }$
into $S_{\text{Wiener}}\left[ \mathbf{w}\right] $, and allow for a Jacobian
correction) 
\begin{eqnarray}
S_{X}\left[ \mathbf{x}\right] =\int_{0}^{t}\frac{1}{2}\frac{[\dot{x}%
-v(x)]^{2}}{\sigma (x)^{2}}d\tau +\frac{1}{2}\int_{0}^{t}\nabla .v(x)d\tau .
\end{eqnarray}

\subsubsection{The Classical Filtering Problem}

Suppose that we have a system described by a process $\left\{ X\left(
t\right) :t\geq 0\right\} $. We obtain information by observing a related
process $\left\{ Y\left( t\right) :t\geq 0\right\} $. 
\begin{eqnarray}
dX &=& v \left( X\right) dt+\sigma \left( X\right) dW\quad \text{(stochastic
dynamics),} \\
dY &=&h\left( X\right) dt+dZ\quad \text{(Noisy observations).}
\end{eqnarray}
Here we assume that the dynamical noise $W$ and the observational noise $Z$
are independent Wiener processes.

\bigskip

The joint probability of both $X$ and $Y$ up to time $t$ is 
\begin{eqnarray}
\mathbb{P}_{X,Y}^{t}\left[ d\mathbf{x},d\mathbf{y}\right] =e^{-S_{X,Y}\left[
x,y\right] }\mathcal{D}\mathbf{x}\mathcal{D}\mathbf{y},
\end{eqnarray}
where 
\begin{eqnarray}
S_{X,Y}\left[ \mathbf{x},\mathbf{y}\right]  &=&S_{X}\left[ \mathbf{x}\right]
+\int_{0}^{t}\frac{1}{2}\left[ \dot{y}-h\left( x\right) \right] ^{2}d\tau  \\
&=&S_{X}\left[ \mathbf{x}\right] +S_{\text{Wiener}}[\mathbf{y}]-\int_{0}^{t}%
\left[ h\left( x\right) \dot{y}-\frac{1}{2}h\left( x\right) ^{2}\right]
d\tau ,
\end{eqnarray}
or 
\begin{eqnarray}
\mathbb{P}_{X,Y}^{t}\left[ d\mathbf{x},d\mathbf{y}\right] =\mathbb{P}_{X}^{t}%
\left[ d\mathbf{x}\right] \mathbb{P}_{\mathrm{Wiener}}^{t}\left[ d\mathbf{y}%
\right] \,  \lambda \left( \mathbf{y}%
| \mathbf{x} \right) .
\end{eqnarray}
where the \textit{Kallianpur-Streibel likelihood}\footnote{Readers with a background in stochastic processes will recognize this as a Radon-Nikodym derivative associated with a Girsanov transformation.} is
\begin{eqnarray}
\lambda _t\left( \mathbf{y}|\mathbf{x} \right)  =e^{\int_{0}^{t}\left[ h\left( x\right) dy(\tau )-\frac{1}{2}h\left(
x\right) ^{2}d\tau \right] }.
\end{eqnarray}
The distribution for $X\left( t\right) $ given observations $\mathbf{y}%
=\left\{ y\left( \tau \right) :0\leq \tau \leq t\right\} $ is then
\begin{eqnarray}
\rho_t\left( x |\mathbf{y}\right) =\frac{\int_{ x(0)=x_0}^{ x(t) =x } \lambda _t\left( \mathbf{y}%
| \mathbf{x}\right) \mathbb{P}_{X}^{t} 
\left[ d\mathbf{x}\right] }{\int_{x(0) =x_0} \lambda_t \left( \mathbf{y}%
| \mathbf{x}^\prime \right) \mathbb{P}_{X}^{t}%
\left[ d\mathbf{x}^\prime \right]}
\end{eqnarray}

\bigskip

Let us write $\mathcal{Y}_t$ for the $\sigma$-algebra generated by the observations $ \{ Y(\tau ) : 0 \le \tau \leq t\} $. 
The estimate for $f (X(t))$ for any function $f$ conditioned on the observations up to time $t$ is called the \textit{filter} and, generalizing (\ref{eq:CE_likelihood}) to continuous time, we may write this as
\begin{eqnarray}
\mathfrak{E}_t (f) &=&  \mathbb{E} [ f(X(t)) | \mathcal{Y}_t ] \nonumber \\
&=& 
\int \rho _t \left( x |\mathbf{y}\right) f(x) \, dx =
 \frac{ \int \sigma_t (x|\mathbf{y}) f(x) dx }{\int \sigma_t (x^\prime|\mathbf{y})
dx^\prime }
\label{eq:filter_pi}
\end{eqnarray}
where  $\sigma_t (x|\mathbf{y}) =\int_{ x(0)=x_0}^{ x(t) =x } \lambda \left( \mathbf{y}%
| \mathbf{x}\right) \mathbb{P}_{X}^{t} \left[ d\mathbf{x}\right]$ is a non-normalized density. We introduce the stochastic process $\sigma_t (x) : \omega \mapsto \sigma _t (x | \mathbf{y} )$ and it can be shown to satisfy the \textit{Duncan-Mortensen-Zakai equation}
\begin{eqnarray}
d \sigma_t (x) = \mathcal{L}^\ast \sigma_t (x) \, dt
+h(x) \sigma_t (x) \, dY(t) .
\end{eqnarray}

This implies the \textit{filtering equation}
\begin{eqnarray}
d \mathfrak{E}_t (f) =\mathfrak{E}_t (\mathcal{L}  f) \, dt + \big\{ \mathfrak{E}_t (fh) - \mathfrak{E}_t (f) \mathfrak{E}_t (h) \big\} dI(t) ,
\end{eqnarray}
where the \textit{innovations process} is defined as
\begin{eqnarray}
dI(t)  = dY(t) - \mathfrak{E}_t (h ) \, dt . 
\end{eqnarray}

%%%%%%%%%%%%%%%%%%%%%%%%%%%%%%%%%%%%%%%%%%%%%%%%%%%
\section{Quantum Filtering}
We now consider the quantum analogue of filtering. See also \cite{BouvanHJam07}-\cite{Rouchon}.
\subsection{Quantum Measurement}
\paragraph{The Basic Concepts}
The Born interpretation of the wave function, $\psi (x)$, in quantum mechanics is that $|\psi (x)|^2$ gives the probability density of finding the particle at position $x$. More generally, in quantum theory, observables are represented by self-adjoint operators on a Hilbert space. The basic postulate of quantum theory is that the pure states of a system correespond to normalized the wave functions, $\Psi$, and we will follow Dirac and denote these as kets $|\Psi \rangle$. When we measure an observable, the physical value we record will be an eigenvalue. If the state is $| \Psi \rangle$ then the average value of the observable represented by $\hat A$ is $\langle \hat A \rangle = \langle \Psi |  \hat A | \Psi \rangle$.

Let us recall that a Hermitian operator $\hat{P}$ is called an \textit{orthogonal projection} if it
satisfies $\hat{P}^{2}=\hat{P}$. Then if we have a Hermitian operator $\hat{A}$ with a discrete set of eigenvalues, then there exists
a collection of orthogonal projections $\hat{P}_{a}$ labeled by the
eigenvalues $a$, satisfying $\hat{P}_{a}\hat{P}_{a^{\prime }}=0$ if $a\neq
a^{\prime }$ and $\sum_{a}\hat{P}_{a}=\hat{I}$, such that 
\begin{eqnarray}
\hat{A}=\sum_{a}a\,\hat{P}_{a}.
\end{eqnarray}
This is the spectral decomposition of $\hat A$. The operators $\hat{P}_{a}$ project onto $\mathcal{E}_{a}$ which is the 
\textit{eigenspace} of $\hat{A}$ for eigenvalue $a$. In other words, $%
\mathcal{E}_{a}$ is the space of all eigenvectors of $\hat{A}$ having
eigenvalue $a$. The eigenspaces are orthogonal, that is $\langle \psi |\phi
\rangle =0$ whenever $\psi $ and $\phi $ lie in different eigenspaces (this
is equivalent to $\hat{P}_{a}\hat{P}_{a^{\prime }}=0$ if $a\neq a^{\prime }$%
), and every vector $|\psi \rangle $ can be written as a superposition of
vectors $\sum_{a}|\psi _{a}\rangle $ where $|\psi _{a}\rangle $ lies in
eigenspace $\mathcal{E}_{a}$. (In fact, $|\psi _{a}\rangle =\hat{P}_{a}|\psi
\rangle $.)

We note that, for any integer $n$, 
\begin{eqnarray}
\hat{A}^{n}=\sum_{a}a^{n}\,\hat{P}_{a}
\end{eqnarray}
and any real $t$
\begin{eqnarray}
e^{it\hat{A}}=\sum_{a}e^{ita}\hat{P}_{a}.
\end{eqnarray}

Suppose we prepare a quantum system in a state $|\Psi \rangle $ and perform
a measurement of an observable $\hat{A}$. We know that we may only measure
an eigenvalue $a$ and quantum mechanics predicts the probability $p_{a}$. In
fact, using the spectral decomposition 
\begin{eqnarray}
\langle \hat{A}^n \rangle =\langle \sum_{a}a^n \,\hat{P}_{a}\rangle
=\sum_{a}\langle a^n\,\hat{P}_{a}\rangle =\sum_{a}a^n \,p_{a},
\end{eqnarray}
and so 
\begin{eqnarray}
p_{a}=\langle \hat{P}_{a}\rangle \equiv \langle \Psi |\hat{P}_{a}|\Psi
\rangle .
\end{eqnarray}

For the special case of a non-degenerate eigenvalue $a$, we have that the
eigenspace $\mathcal{E}_{a}$ is spanned by a single eigenvector $|a\rangle $%
, which we take to be normalized. In this case we have $\hat{P}%
_{a}=|a\rangle \langle a|$

\begin{eqnarray}
p_{a}=\langle \Psi |\hat{P}_{a}|\Psi \rangle =\langle \Psi |a\rangle \langle
a|\Psi \rangle \equiv \left| \langle a|\Psi \rangle \right| ^{2}.
\end{eqnarray}
We see that if an observable $\hat{A}$ has a non-degenerate eigenvalue $a$
with normalized eigenvector $|a\rangle $, then if the system is prepared in
state $|\Psi \rangle $, the probability of measuring $a$ in an experiment is 
$\left| \langle a|\Psi \rangle \right| ^{2}$. The modulus squared of an
overlap in this way may therefore have the interpretation as a probability.

The degenerate case needs some more attention. Here the eigenspace $\mathcal{E}_{a}$ can
spanned by a set of orthonormal vectors $|a1\rangle ,|a2\rangle ,\cdots $ so
that $\hat{P}_{a}=\sum_{n}|an\rangle \langle an|$, and so $%
p_{a}=\sum_{n}\left| \langle an|\Psi \rangle \right| ^{2}$. The choice of
the orthonormal basis for $\mathcal{E}_{a}$ is not important!

The probability $p_{a}$ is equal to the length-squared of $\hat{P}_{a}|\Psi
\rangle $, that is, 
\begin{eqnarray}
p_{a}= \| \hat{P}_{a}\Psi  \| ^{2}.
\end{eqnarray}
To see this, note that $ \| \hat{P}_{a}\Psi  \| ^{2}$ is the overlap of the
ket $\hat{P}_{a}|\Psi \rangle $ with its own bra $\langle \Psi |\hat{P}%
_{a}^{\dag }$ so 
\begin{eqnarray}
\| \hat{P}_{a}\Psi \| ^{2}=\langle \Psi |\hat{P}_{a}^{\dag }\,%
\hat{P}_{a}|\Psi \rangle =\langle \Psi |\hat{P}_{a}^{2}|\Psi \rangle
=\langle \Psi |\hat{P}_{a}|\Psi \rangle =p_{a}
\end{eqnarray}
where we used the fact that $\hat{P}_{a}=\hat{P}_{a}^{\dag }=\hat{P}_{a}^{2}$%
.

In the picture below, we project $|\Psi \rangle $ into the eigenspace $%
\mathcal{E}_{a}$ to get $\hat{P}_{a}|\Psi \rangle $. In the special case
where $|\Psi \rangle $ was already in the eigenspace, it equals its own
projection ($\hat{P}_{a}|\Psi \rangle =|\Psi \rangle $)\ and so $p_{a}=1$
since the state $|\Psi \rangle $ is normalized. If the state $|\Psi \rangle $
is however orthogonal to the eigenspace then its projection is zero ($\hat{P}%
_{a}|\Psi \rangle =0$) and so $p_{a}=0$.

In general, we get something in between. In the picture below we see that $%
|\Psi \rangle $ has a component in the eigenspace and a component orthogonal
to it. The projected vector $\hat{P}_{a}|\Psi \rangle $ will then have
length less than the original $|\Psi \rangle $, and so $p_{a}<1$.

\begin{figure}[tbph]
\centering
\includegraphics[width=0.750\textwidth]{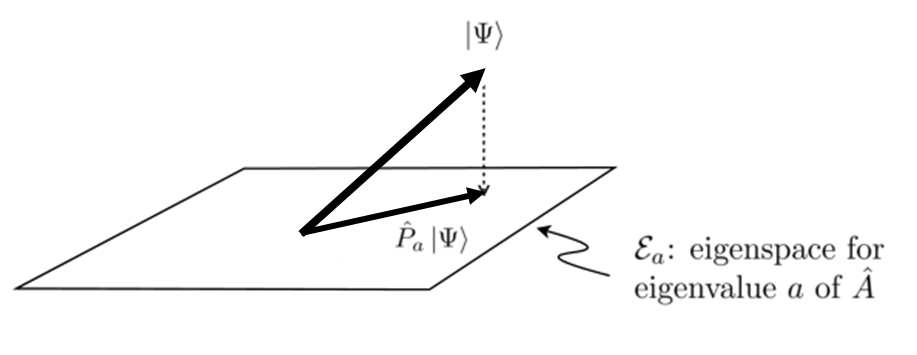} \label%
{fig:Projection_post}
\caption{The state $|\Psi \rangle $ is projected into the eigenspace $%
\mathcal{E}_{a}$ corresponding to the eigenvalue $a$ of $\hat{A}$.}
\end{figure}

\paragraph{Von Neumann's Projection Postulate}
Suppose the initial state is $|\Psi \rangle $ and we measure the eigenvalue $%
a$ of observable $\hat{A}$ in an given experiment. A second measurement of $%
\hat{A}$ performed straight way ought to yield the same value $a$ again,
this time with certainty.

The only way however to ensure that we measure a given eigenvalue with
certainty is if the state lies in the eigenspace for that eigenvalue. We
therefore require that the state of the system immediately after the result $%
a$ is measured will jump from $|\Psi \rangle $ to something lying in the
eigenspace $\mathcal{E}_{a}$. This leads us directly to the von Neumann
projection postulate.

\bigskip

\textbf{The von Neumann projection postulate:} 
\textit{If the state of a system is given by a ket $|\Psi \rangle $, and a
measurement of observable $\hat{A}$ 
yields the eigenvalue $a$, then the state immediately after measurement
becomes}
$ |\Psi _{a}\rangle =\dfrac{1}{\sqrt{p_{a}}}\,\hat{P}_{a}|\Psi
\rangle .$

We note that the projected vector $\hat{P}_{a}|\Psi \rangle $ has length $%
\sqrt{p_{a}}$ so we need to divide by this to ensure that $|\Psi _{a}\rangle 
$ is properly normalized. The von Neumann postulate is essentially the
simplest geometric way to get the vector $|\Psi \rangle $ into the
eigenspace: project down and then normalize!

\paragraph{Compatible Measurements}

Suppose we measure a pair of observables $\hat{A}$ and $\hat{B}$ in that
sequence. The $\hat{A}$-measurement leaves the state in the eigenspace of
the measured value $a$, the subsequent $\hat{B}$-measurement then leaves the
state in the eigenspace of the measured value $b$. If we then went back and
remeasured $\hat{A}$ would be find $a$ again with certainty? The state after
the second measurement will be an eigenvector of $\hat{B}$ with eigenvalue $b
$, but this need not necessarily be an eigenvector of $\hat{A}$.

Let $A$ and $\hat{B}$ be a pair of observables with spectral decompositions $%
\sum_{a}a\hat{P}_{a}$ and $\sum_{b}b\hat{Q}_{b}$ respectively. Let us
measure $\hat{A}$ and then $\hat{B}$ recording values $a$ and $b$
respectively. If the initial state was $|\Psi _{\text{in}}\rangle $ then we
obtain after both measurements the final state will be 
\begin{eqnarray}
|\Psi _{\text{out}}\rangle \propto \hat{Q}_{b}\hat{P}_{a}\,|\Psi _{\text{in}%
}\rangle .
\end{eqnarray}
In particular $|\Psi _{\text{out}}\rangle $ is an eigenstate of $\hat{B}$
with eigenvalue $b$. However suppose we also wanted $|\Psi _{\text{out}%
}\rangle $ to be an eigenstate of $\hat{A}$ with the original eigenvalue $a$%
, the we must have $\hat{P}_{a}|\Psi _{\text{out}}\rangle =|\Psi _{\text{out}%
}\rangle $ or equivalently 
\begin{eqnarray}
\hat{P}_{a}\hat{Q}_{b}\hat{P}_{a}\,|\Psi _{\text{in}}\rangle =\hat{Q}_{b}%
\hat{P}_{a}\,|\Psi _{\text{in}}\rangle .
\end{eqnarray}
If we want this to be true irrespective of the actual initial state $|\Psi _{%
\text{in}}\rangle $ then we arrive at the operator equation 
\begin{eqnarray}
\hat{P}_{a}\hat{Q}_{b}\hat{P}_{a}=\hat{Q}_{b}\hat{P}_{a}.
\end{eqnarray}

\begin{proposition}
Let $\hat{P}$ and $\hat{Q}$ be a pair of orthogonal projections satisfying $%
\hat{P}\hat{Q}\hat{P}=\hat{Q}\hat{P}$ then $\hat{P}\hat{Q}=\hat{Q}\hat{P}$.
\end{proposition}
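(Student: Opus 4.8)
The plan is to exploit the self-adjointness of orthogonal projections together with the observation that the hypothesis becomes its own mirror image under the adjoint operation. Recall that an orthogonal projection satisfies $\hat{P}=\hat{P}^\dag=\hat{P}^2$, and likewise $\hat{Q}=\hat{Q}^\dag=\hat{Q}^2$.

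First I would note that the triple product $\hat{P}\hat{Q}\hat{P}$ is automatically self-adjoint: since $\hat{P}$ and $\hat{Q}$ are Hermitian, $(\hat{P}\hat{Q}\hat{P})^\dag = \hat{P}^\dag\hat{Q}^\dag\hat{P}^\dag = \hat{P}\hat{Q}\hat{P}$. This is the structural fact that drives the whole argument. Next I would take the adjoint of the hypothesis $\hat{P}\hat{Q}\hat{P}=\hat{Q}\hat{P}$. By the previous remark the left-hand side is left unchanged, while the right-hand side transforms as $(\hat{Q}\hat{P})^\dag = \hat{P}^\dag\hat{Q}^\dag = \hat{P}\hat{Q}$. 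Thus the adjoint of the hypothesis reads $\hat{P}\hat{Q}\hat{P}=\hat{P}\hat{Q}$.

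Finally I would simply chain the two identities: the hypothesis itself gives $\hat{Q}\hat{P}=\hat{P}\hat{Q}\hat{P}$, while its adjoint gives $\hat{P}\hat{Q}\hat{P}=\hat{P}\hat{Q}$, so that $\hat{Q}\hat{P}=\hat{P}\hat{Q}\hat{P}=\hat{P}\hat{Q}$, which is precisely the claimed commutativity. There is essentially no obstacle in this proof; the only point one must spot is that conjugating the hypothesis flips the one-sided product $\hat{Q}\hat{P}$ into $\hat{P}\hat{Q}$ while leaving the symmetric product $\hat{P}\hat{Q}\hat{P}$ invariant, forcing both one-sided products to equal the same self-adjoint operator. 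I would not expect to need any spectral or geometric reasoning about the underlying eigenspaces.
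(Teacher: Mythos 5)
Your proof is correct, and it takes a genuinely leaner route than the paper's. Both arguments share the same kernel: show that the one-sided product $\hat{Q}\hat{P}$ equals a manifestly self-adjoint operator, so that taking adjoints forces $\hat{Q}\hat{P}=\hat{P}\hat{Q}$. But the execution differs. The paper introduces the auxiliary operator $\hat{R}=\hat{Q}\hat{P}\hat{Q}$, verifies that it is an orthogonal projection (the idempotence computation $\hat{R}^{2}=\hat{R}$ uses the hypothesis together with $\hat{Q}^{2}=\hat{Q}$), then asserts the identification $\hat{R}=\hat{Q}\hat{P}$ and applies the self-adjointness argument to that. You instead work directly with the operator $\hat{P}\hat{Q}\hat{P}$ already present in the hypothesis: conjugating the hypothesis yields $\hat{P}\hat{Q}\hat{P}=\hat{P}\hat{Q}$, and chaining this with the hypothesis itself gives $\hat{Q}\hat{P}=\hat{P}\hat{Q}\hat{P}=\hat{P}\hat{Q}$ in one line. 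Your route buys two things: first, you never invoke idempotence, so your argument actually proves the more general fact that any two self-adjoint operators satisfying $\hat{P}\hat{Q}\hat{P}=\hat{Q}\hat{P}$ must commute; second, you sidestep the step where the paper claims $\hat{R}=\hat{Q}\hat{P}$ --- that identification is not immediate from the definition of $\hat{R}$ and itself requires essentially the adjoint manipulation you perform (or a further application of the hypothesis), so the paper's detour through the projection properties of $\hat{R}$ does no real work toward the conclusion. What the paper's version offers in exchange is the geometrically suggestive observation that $\hat{Q}\hat{P}\hat{Q}$ is itself an orthogonal projection, but that is a by-product rather than a necessity for the proposition.
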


\begin{proof}
We first observe that $\hat{R}=\hat{Q}\hat{P}\hat{Q}$ will again be an
orthogonal projection. To this end we must show that $R^{\dag }=R$ and $%
R^{2}=R$. However, $R^{\dag }=\left( \hat{Q}\hat{P}\hat{Q}\right) ^{\dag }=%
\hat{Q}^{\dag }\hat{P}^{\dag }\hat{Q}^{\dag }=\hat{Q}\hat{P}\hat{Q}=R$ and 
\begin{eqnarray*}
\hat{R}^{2} &=&\left( \hat{Q}\hat{P}\hat{Q}\right) \left( \hat{Q}\hat{P}\hat{%
Q}\right) =\hat{Q}\hat{P}\hat{Q}^{2}\hat{P}\hat{Q} \\
&=&\hat{Q}\hat{P}\hat{Q}\hat{P}\hat{Q}=\hat{Q}(\hat{P}\hat{Q}\hat{P})\hat{Q}
\\
&=&\hat{Q}(\hat{Q}\hat{P})\hat{Q}=\hat{Q}^{2}\hat{P}\hat{Q} \\
&=&\hat{Q}\hat{P}\hat{Q}=\hat{R}.
\end{eqnarray*}

However we also have $\hat{R}=\hat{Q}\hat{P}$, so the relation $\hat{R}=\hat{%
R}^{\dag }$ implies that $\hat{Q}\hat{P}=\hat{P}^{\dag }\hat{Q}^{\dag }=\hat{%
P}\hat{Q}$.
\end{proof}

We see that our operator identity above means that $\hat{Q}_{a}$ and $\hat{P}%
_{b}$ need to commute! If we wanted the $\hat{B}$-measurement not to disturb
the $\hat{A}$-measurement for any possible outcome $a$ and $b$, then we
require that all the eigen-projections of $\hat{A}$ commute with all the
eigen-projections of $\hat{B}$, and this implies that .

\begin{definition}
A collection of observables are compatible if they commute. We define the
commutator of two operators as 
\begin{eqnarray}
\left[ \hat{A},\hat{B}\right] =\hat{A}\hat{B}-\hat{B}\hat{A}
\end{eqnarray}
\end{definition}

So $\hat{A}$ and $\hat{B}$\ are compatible if $\left[ \hat{A},\hat{B}\right]
=0$.

\paragraph{Von Neumann's Model of Measurement}
The postulates of quantum mechanics outlined above assume that all measurements are idealized, but one might expect the actual process of extracting information from quantum systems to be more involved. Von Neumann modeled the measurement process as follows.
We wish to get information about an observable, $\hat{X}$, say the position of a quantum system. Rather than measure 
$\hat{X}$ directly, we measure an observable $\hat{Y}$ giving the pointer
position of a second system (called the measurement apparatus).

We will reformulate the von Neumann measurement problem in the language of estimation theory.
First we assume
that apparatus is described by a wave-function $\phi $. The initial state of
the system and apparatus is $|\Psi _{0}\rangle =|\Psi _{\mathrm{prior}}\rangle
\otimes |\phi \rangle $, i.e., 
\begin{eqnarray}
\langle x,y|\Psi _{0}\rangle =\Psi _{\mathrm{prior}}\left( x\right) \,\phi
\left( y\right) .
\end{eqnarray}
(Note  that we are already falling in line with the estimation way of thinking by referring to the initial wave function of the particle as an \textit{a priori} wave function - it is something we have to fix at the outset, even if we recognize it as only a guess for the correct physical state.))
The system and apparatus are taken to interact by means of the unitary 
\begin{eqnarray}
\hat{U}=e^{i\mu \hat{X}\otimes \hat{P}_{\mathrm{app}}/\hbar }
\end{eqnarray}
where $\hat{P}_{\mathrm{app}}=-i\hbar \frac{\partial }{\partial y}$ is the
momentum operator of the pointer conjugate to $\hat{Y}$. After coupling, the
joint state is 
\begin{eqnarray}
\langle x,y|\hat{U}\Psi _{0}\rangle =\Psi _{\mathrm{prior}}\left( x\right)
\,\phi \left( y-\mu x\right) .
\end{eqnarray}
If the measured value of $\hat{Y}$ is $y$, then the \textit{a posteriori wave-function} must be
\begin{eqnarray}
\psi _{\mathrm{post}}(x|y)=\frac{1}{\sqrt{\rho _{Y}(y)}}\psi _{\mathrm{prior}}      \left( x\right) \,\phi \left( y-\mu x\right) 
\end{eqnarray}
where 
\begin{eqnarray}
\rho _{Y}(y)=\int |\psi _{\mathrm{prior}}\left( x\right) \,\phi \left( y-\mu
x\right) |^{2}dx.
\end{eqnarray}
Basically, the pointer position will be a random variable with pdf given by $\rho _{Y}$: the \textit{a posteriori} wave-function may then be thought of as a random
wave-function on the system Hilbert space:
\begin{eqnarray}
\psi_{\mathrm{prior}} (x) \longrightarrow \psi _{\mathrm{post}}(x|Y).
\end{eqnarray}
In the parlance of quantum theorists, the wave function of the apparatus collapses to $| y \rangle$, while we update the \textit{a priori} wave function to get the \textit{a posteriori} one.

We have been describing events in the Schr\"{o}dinger picture where states evolve while observables remain fixed. In this picture, we measure the observable $\hat Y^{\mathrm{in}} =I \otimes \hat Y$, but the state is changing in time.  It is instructive to describe events in the Heisenberg picture. Here the state is fixed as $|\Psi _{0}\rangle =|\Psi _{\mathrm{prior}}\rangle
\otimes |\phi \rangle $, while the observables evolve. In fact, the observable that we actually measure is 
\begin{eqnarray}
\hat Y^{\text{out}} = \hat U^\ast \big( 
I \otimes \hat Y \big) \hat U = 
 \underbrace{\mu \, \hat U^\ast \big(  \hat X \otimes I \big)  \hat U}_{\mathrm{signal}} + \underbrace{\hat Y^{\mathrm{in}}}_{\mathrm{noise}},
\end{eqnarray}
from which it is clear that we are obtaining some information about $\hat X$. Note that the measured observable $ \hat Y^{\text{out}}$ is explicitly of the form \emph{signal} plus \emph{noise} as in Example \ref{ex:s+n}. The noise term, $\hat Y^{\text{in}}$, is independent of the signal and has the prescribed pdf  $ | \phi (y) |^2$.

\subsection{Quantum Markovian Systems}

\paragraph{Quantum Systems with Classical Noise}
We consider a quantum system driven by Wiener noise. For $H$ and $R$ self-adjoint, we set
\begin{eqnarray}
U(t) = e^{-iH t -iR W(t) } ,
\end{eqnarray}
which clearly defines a unitary process. From the Ito calculus we can quickly deduce the corresponding Schr\"{o}dinger equation
\begin{eqnarray}
dU(t) = \big[ -iH - \frac{1}{2} R^2 \big] U(t) \, dt
-iR U(t) \, dW(t) .
\end{eqnarray}
If we set $j_t (X) = U(t)^\ast XU(t)$, which we may think of as an embedding of the system observable $X$ into a noisy environment, then we similarly obtain
\begin{eqnarray}
dj_t (X) 
= j_t \big( \mathcal{L} (X) \big) \, dt
-i j_t \big(  [X,R] \big) \, dW(t) .
\end{eqnarray}
where
\begin{eqnarray}
\mathcal{L} (X) =
-i[X,H] - \frac{1}{2} \big[ [X,R],R \big] .
\end{eqnarray}

An alternative is to use Poissonian noise. Here we apply a unitary kick, $S$, at times distributed as a Poisson process with rate $\nu>0$. Let $N(t)$ count the number of kicks up to time $t$, then $\{ N(t) : t \ge 0 \}$ is a stochastic process with independent stationary increments (like the Wiener process) and we have the Ito rules
\begin{eqnarray}
dN(t) \, dN(t) = dN(t), \qquad \langle dN(t) \rangle
= \nu \, dt .
\end{eqnarray}
The Schr\"{o}dinger equation is $dU(t) = (S-I) U(t) \, dN(t)$ and for the evolution of observables we now have
\begin{eqnarray}
dj_t (X) = j_t \big( \mathcal{L} (X)\big) dN(t), \qquad
\mathcal{L}(X) = S^\ast X S -X.
\end{eqnarray}

\paragraph{Lindblad Generators}
A quantum dynamical semigroup is a family of CP maps, 
$\{ \Phi_t: t \geq 0\}$, such that $\Phi_t \circ \Phi_s
=\Phi_{t+s}$ and $\Phi (I) =I$. Under various continuity conditions one can show that the general form of the generator is
\begin{eqnarray}
\mathcal{L} (X) = \sum_k \frac{1}{2} L_k^\ast [X,L_k]
+ \sum_k \frac{1}{2} [L_k^\ast ,X] L_k - i [X,H ].
\end{eqnarray} 
These include the examples emerging from classical noise above - in fact, combinations of the Wiener and Poissonian cases give the general classical case. But the class of Lindblad generators is strictly larger that this, meaning that we need quantum noise! This is typically what we consider when modeling quantum optics situation.

\subsection{Quantum Noise Models}

\paragraph{Fock Space}
We recall how to model bosonic fields. We wish to describe a typical pure state $| \Psi \rangle$ of the field. If we look at the field we expect to see a certain number, $n$, of particles at locations $x_1 , x_2 , \cdots , x_n$ and to this situation we assign a complex number (the probability amplitude) $\psi_n (x_1 , x_2, \cdots x_n ) $. As the particles are indistinguishable bosons, the amplitude should be completely symmetric under interchange of particle identities. 

The field however can have an indefinite number of particles - that is, it can be written as a superposition of fixed number states. The general form of a pure state for the field will be
\begin{eqnarray}
| \Psi \rangle = \big( \psi_0 , \psi_1 , \psi_2, \psi_3 ,
\cdots \big).
\end{eqnarray}
Note that the case $n=0$ is included and is understood as the vacuum state. Here $\psi_0 $ is a complex number, with $ p_0 =| \psi_0 |^2$ giving the probability for finding no particles in the field.

The probability that we have exactly $n$ particles is
\begin{eqnarray}
p_{n}=\int \left| \psi _{n}\left( x_{1},x_{2},\cdots ,x_{n}\right) \right|
^{2}dx_{1}dx_{2}\cdots dx_{n},
\end{eqnarray}
and the normalization of the state is therefore $\sum_{n=0}^\infty p_n =1$.

In particular, we take the vacuum state to be
\begin{eqnarray}
| \Omega \rangle = \big( 1 , 0,0,0, \cdots \big) .
\end{eqnarray}

The Hilbert space spanned by such indefinite number of 
indistinguishable boson states is called \textit{Fock Space}.

A convenient spanning set is given by the exponential vectors 
\begin{eqnarray}
\langle x_{1},x_{2},\cdots ,x_{n}|\exp \left( \alpha \right) \rangle =\frac{1%
}{\sqrt{n!}}\alpha \left( x_{1}\right) \alpha \left( x_{2}\right) \cdots
\alpha \left( x_{n}\right) .
\end{eqnarray}
They are, in fact, over-complete and we have the inner products
\begin{eqnarray}
&&\langle \exp \left( \alpha \right) |\exp \left( \beta \right) \rangle \nonumber \\
&=&\sum_{n}  \frac{1}{n!} \int \alpha \left(
x_{1}\right) ^{\ast }\cdots \alpha \left( x_{n}\right) ^{\ast }\beta \left(
x_{1}\right) \cdots \beta \left( x_{n}\right) \,dx_{1}\cdots dx_{n} \nonumber  \\
&=&e^{\int \alpha \left( x\right) ^{\ast }\beta \left( x\right) dx} \nonumber \\
&=&e^{\langle \alpha |\beta \rangle }.
\end{eqnarray}
The exponential vectors, when normalized, give the analogues to the coherent states for a single mode.

We note that the vacuum is an example: $| \Omega \rangle = | \exp (0) \rangle$.

\paragraph{Quanta on a Wire}
We now take our space to be 1-dimensional - a wire. Let's parametrize the position on the wire by variable $\tau$, and denote by $\mathfrak{F}_{[s,t]}$ the Fock space over a segment of the wire $s \le \tau \le t$. We have the following tensor product decomposition \begin{eqnarray}
\mathfrak{F}_{A \cup B} = \mathfrak{F}_A \otimes
\mathfrak{F}_B, \qquad \qquad \text{if} \quad A \cap B = \emptyset
.
\end{eqnarray}

In is convenient to introduce quantum white noises $b(t)$ and $b(t)^\ast$ satisfying the singular commutation relations
\begin{eqnarray}
 [b(t) ,b(s)^\ast ] &=& \delta (t-s) .
\end{eqnarray}
Here $b(t)$ annihilates a quantum of the field at location $t$. In keeping with the usual theory of the quantized harmonic oscillator, we take it that $b(t)$ annihilates the vacuum: $b(t) \, | \Omega \rangle  = 0$. More generally, this implies that
\begin{eqnarray}
b(t) \, | \exp (\beta ) \rangle =
\beta(t)\, | \exp (\beta ) \rangle .
\label{eq:eigen_b}
\end{eqnarray}
The adjoint $b(t)^\ast$ creates a quantum at position $t$.

The quantum white noises are operator densities and are singular, but their integrated forms do correspond to well defined operators
which we call the \textit{annihilation and creation processes}, respectively,
\begin{eqnarray}
B(t) = \int_0^t b( \tau ) d \tau , \qquad
B(t)^\ast = \int_0^t b(\tau )^\ast d \tau .
\end{eqnarray}
We see that 
\begin{eqnarray}
[B(t) , B(s)^\ast ]
= \int_0^t d\tau \int_0^s d \sigma \, \delta (\tau - \sigma )
= \text{min} (t,s).
\end{eqnarray}

In addition we introduce a further process, called the \textit{number process}, according to
\begin{eqnarray}
\Lambda (t) = \int_0^t b(\tau )^\ast b ( \tau ) d \tau .
\end{eqnarray}

\paragraph{Quantum Stochastic Models}
We now think of our system as lying at the origin $ \tau =0$ of a quantum wire. The quanta move along the wire at the speed of light, $c$, and the parameter $\tau$ can be thought of as $x/c$ which is the time for quanta at a distance $x$ away to reach the system. Better still $\tau$ is the time at which this part of the field passes through the system. The process $B(t) = \int_0^t b( \tau ) d\tau$ is the operator describing the annihilation of quanta passing through the system at some stage over the time-interval $[0,t]$.

Fix a system Hilbert space, $\mathfrak{h}_0$, called the \textit{initial space}. A quantum stochastic process is a family of operators, $ \{ X(t): t \ge 0\}$, acting on $\mathfrak{h}_0 \otimes \mathfrak{F}_{[0, \infty )}$.                         .

The process is \textit{adapted} if, for each $t$, the operator $X(t)$ acts trivially on the future environment factor                .

QSDEs with adapted coefficients where originally introduced by Hudson \& Parthasarathy in 1984. Let $\{ X_{\alpha \beta } (t) : t \ge 0\}$ be four adapted quantum stochastic processes defined for $\alpha , \beta  \in \{ 0,1 \}$. We then define consider the QSDE
\begin{eqnarray}
\dot X (t) =
b(t)^\ast (t) X_{11} (t) b(t) +
b(t)^\ast X_{10} +
X_{01} (t) b(t)
+X_{00} (t),
\label{eq:fluxion_QSDE}
\end{eqnarray}
with initial condition $X(0) = X_0 \otimes I$. To understand this we take matrix elements between states of the form $| \phi \otimes \exp (\alpha ) \rangle$ and use the eigen-relation (\ref{eq:eigen_b}) to get the integrated form
\begin{eqnarray}
\langle \phi \otimes \exp ( \alpha ) |  X (t) | \psi \otimes \exp (\beta ) \rangle = \langle \phi    | X_0  | \psi \rangle \, \langle \exp ( \alpha ) |\exp (\beta ) \rangle 
\end{eqnarray}
\begin{eqnarray}
+&& \int_0^t \alpha(\tau )^\ast\langle \phi \otimes \exp ( \alpha ) |  X_{11} (t) | \psi \otimes \exp (\beta ) \rangle  \beta (\tau ) d\tau 
\nonumber \\
+&& \int_0^t \alpha(\tau )^\ast \langle \phi \otimes \exp ( \alpha ) |  X_{10} (t) | \psi \otimes \exp (\beta ) \rangle   d\tau 
\nonumber \\
+&& \int_0^t \langle \phi \otimes \exp ( \alpha ) |  X_{01} (t) | \psi \otimes \exp (\beta ) \rangle   \beta (\tau ) d\tau
\nonumber  \\
+&& \int_0^t \langle \phi \otimes \exp ( \alpha ) |  X_{00} (t) | \psi \otimes \exp (\beta ) \rangle     d\tau 
.
\end{eqnarray}
Processes obtain this way are called quantum stochastic integrals.

The approach of Hudson and Parthasarathy is actually different \cite{HP84,Par92}. The arrive at the process defined by (\ref{eq:fluxion_QSDE}) by building the analogue of the Ito theory for stochastic integration: that is the show conditions in which
\begin{eqnarray}
dX(t) &=& X_{11}(t) \otimes d\Lambda (t)
+X_{10} (t) \otimes dB(t)^\ast
+X_{01} (t) \otimes dB (t)
+ X_{00} (t) \otimes dt ,\nonumber  \\
\quad
\end{eqnarray}
makes sense as a limit process where all the increments are future pointing. That is $\Delta \Lambda \equiv \Lambda (t + \Delta t) - \Lambda (t)$ with $\Delta t >0$, etc.

One has, for instance,
\begin{eqnarray}
&&\langle \phi \otimes \exp ( \alpha ) |  X_{00} (t) \otimes \Delta B (t)| \psi \otimes \exp (\beta ) \rangle    \nonumber 
\\ && \quad = \bigg( \int_t^{t+\Delta t}   \beta (\tau ) d\tau \bigg) \times  
\langle \phi \otimes \exp ( \alpha ) |  X_{00} (t) \otimes I| \psi \otimes \exp (\beta ) \rangle ,
\end{eqnarray}
etc., so the two approaches coincide.

\paragraph{Quantum Ito Rules}
It is clear from (\ref{eq:fluxion_QSDE}) that this calculus is Wick ordered - note that the creators $b(t)^\ast$ all appear to the left and all the annihilators, $b(t)$, appear to the right of the coefficients. The product of two Wick ordered expressions in not immediately Wick ordered and one must use the singular commutation relations to achieve this. This results in a additional term which corresponds to a quantum Ito correction. 

We have 
\begin{eqnarray}
dB(t) dB(t) = dB(t)^\ast dB (t) = dB^\ast (t) dB^\ast (t)=0
\end{eqnarray}
To see this, let $X_t $ adapted, then
\begin{eqnarray}
\langle \exp (\alpha) | X_t dB(t)^\ast dB(t) | \exp (\beta) \rangle
= \alpha (t)^\ast \langle \exp ( \alpha ) | X_t \exp (\beta ) \rangle
\beta (t)
\,  (dt)^2
\end{eqnarray}
As we have a square of $dt$ we can neglect such terms. 

However, we have
\begin{eqnarray}
[ B(t) - B(s) , B (t)^\ast - B(s)^\ast ] = t-s,
\qquad (t>s)
\end{eqnarray}
and so $\Delta B \, \Delta B^\ast = \Delta B^\ast
\Delta B + \Delta t$. The infinitesimal form of this is then
\begin{eqnarray}
dB(t) dB(t)^\ast = dt .
\end{eqnarray}
This is strikingly similar to the classical rule for increments of the Wiener process!

In fact, we have the following quantum Ito table
\begin{eqnarray}
\begin{tabular}{l|llll}
$\times $ & $dt$ & $dB$ & $dB^{\ast }$ & $d\Lambda $ \\ \hline
$dt$ & 0 & 0 & 0 & 0 \\ 
$dB$ & 0 & 0 & $dt$ & $dB$ \\ 
$dB^{\ast }$ & 0 & 0 & 0 & 0 \\ 
$d\Lambda $ & 0 & 0 & $dB^{\ast }$ & $d\Lambda $%
\end{tabular}
.
\end{eqnarray}
Each of the non-zero terms arises from multiplying two processes that are not in Wick order.

For a pair of quantum stochastic integrals, we have the following quantum Ito product formula
\begin{eqnarray}
d \big( XY \big)
=(dX) dY + dX (dY) + (dX) (dY).
\end{eqnarray}
Unlike the classical version, the order of $X$ and $Y$ here is crucial.

\paragraph{Some Classical Processes On Fock Space}
The process $Q(t) = B(t) + B(t)^\ast $ is self-commuting, that is $[Q(t),Q(s) ] =0, \quad \forall t,s$, and has the distribution of a Wiener process is the vacuum state
\begin{eqnarray}
\langle \dot Q (t) \rangle &=& \langle \Omega
| [
b(t) + b(t)^\ast ] \Omega \rangle =0, \\
\langle \dot Q(t) \dot Q(s) \rangle
&=& \langle \Omega | b(t) b^\ast (s) \Omega \rangle
= \delta (t-s) .
\end{eqnarray}

The same applies to $P(t) = \frac{1}{i} [ B(t) - B(t)^\ast ]$, but
\begin{eqnarray}
[Q(t), P(s)] =2i \, \text{min}(t,s).
\end{eqnarray}
So we have two non-commuting Wiener processes in Fock space. We refer to $Q$ and $P$ as canonically conjugate quadrature processes.

One see that, for instance,
\begin{eqnarray}
dQ dQ = dB dB^\ast = dt.
\end{eqnarray}

We also obtain a Poisson process by the prescription
\begin{eqnarray}
N(t) = \Lambda (t) + \sqrt{\nu} B^\ast (t)
+\sqrt{\nu} B(t) + \nu t.
\end{eqnarray}
One readily checks that $dN dN = dN$ from the quantum Ito table.

\paragraph{Emission-Absorption Interactions}
Let us consider a singular Hamiltonian of the form
\begin{eqnarray}
\Upsilon (t) =
H \otimes I
+i
L \otimes b(t)^\ast - i L^\ast \otimes b(t).
\label{eq:Upsilon}
\end{eqnarray}
We will try and realize the solution to the Schr\"{o}dinger equation
\begin{eqnarray}
\dot U (t) = -i \Upsilon (t) \, U(t), \qquad U(0)=I.
\label{eq:QSDE_flux_Up}
\end{eqnarray}
as a unitary quantum stochastic integral process.

Let us first remark that the annihilator part of (\ref{eq:Upsilon}) will appear out of Wick order when we consider (\ref{eq:QSDE_flux_Up}). The standard approach in quantum field theory is to develop the unitary $U(t)$ as a Dyson series expansion - often re-interpreted as a time order-exponential:
\begin{eqnarray}
U(t) &=& I -i \int_0^t \Upsilon (\tau ) U(\tau ) 
d \tau \nonumber \\
&=&  1 - i \int_0^t d\tau \Upsilon (\tau )
+ (-i)^2 \int_0^t d \tau_2 \int_0^{\tau_2} d\tau_2
\Upsilon (\tau _2 ) \Upsilon (\tau _1 ) + \cdots \nonumber  \\
&=& \vec{T} e^{-i \int_0^t \Upsilon ( \tau ) d\tau} .
\end{eqnarray}
In our case the field terms - the quantum white noises - are linear, however, we have the problem that they come multiplied by the system operators $L$ and $L^\ast$ which do not commute, and don't necessarily commute with $H$ either.

Fortunately we can do the Wick ordering in one fell swoop rather than having to go down each term of the Dyson series.
We have
\begin{eqnarray}
\left[ b\left( t\right) ,U\left( t\right) \right]  &=&\left[ b\left(
t\right) ,I-i\int_{0}^{t}\Upsilon \left( \tau \right) U\left( \tau \right)
d\tau \right] =-i\int_{0}^{t}\left[ b\left( t\right) ,\Upsilon \left( \tau
\right) \right] U\left( \tau \right) d\tau  \nonumber  \\
&=&\int_{0}^{t}\left[ b\left( t\right) ,Lb\left( \tau \right) ^{\ast }\right]
U\left( \tau \right) d\tau \nonumber  \\
&=&L\int_{0}^{t}\delta \left( t-\tau \right)
U\left( \tau \right) d\tau =\frac{1}{2}LU\left( t\right) ,
\end{eqnarray}
where we dropped the $[b(t) , U(\tau )]$ term as this should vanish for $t> \tau$ and took half the weight of the $\delta$-function due to the upper limit $t$ of the integration. However, we get
\begin{eqnarray}
b\left( t\right) U\left( t\right) =U\left( t\right) b\left( t\right) +\frac{1%
}{2}LU\left( t\right) .
\end{eqnarray}
Plugging this into the equation (\ref{eq:QSDE_flux_Up}), we get
\begin{eqnarray}
\dot{U}\left( t\right)  &=&b\left( t\right) ^{\ast }LU\left( t\right)
-L^{\ast }b\left( t\right) U\left( t\right) -iH\left( t\right) U\left(
t\right) \nonumber  \\
&=&b\left( t\right) ^{\ast }LU\left( t\right) -L^{\ast }U\left( t\right)
b\left( t\right) -\left( \frac{1}{2}L^{\ast }L+iH\right) U\left( t\right) .
\end{eqnarray}
which is now Wick ordered. We can interpret this as the Hudson-Parthasarathy equation
\begin{eqnarray}
dU\left( t\right) =\left\{ L\otimes dB\left( t\right) ^{\ast }-L^{\ast
}\otimes dB\left( t\right) -\left( \frac{1}{2}L^{\ast }L+iH\right) \otimes
dt\right\} U\left( t\right) .
\end{eqnarray}

The corresponding Heisenberg equation for $j_t (X) = U(t)^\ast [X \otimes I ] U(t) $ will be

\begin{eqnarray}
dj_{t}\left( X\right)  &=&dU\left( t\right) ^{\ast }\left[ X\otimes I\right]
U\left( t\right) +U\left( t\right) ^{\ast }\left[ X\otimes I\right] dU\left(
t\right) \nonumber  \\
&& +dU\left( t\right) ^{\ast }\left[ X\otimes I\right] dU\left(
t\right) \nonumber  \\
&=&j_{t}\left( \mathcal{L}X\right) \otimes dt+j_{t}\left( \left[ X,L\right]
\right) \otimes dB\left( t\right) ^{\ast }+j_{t}\left( \left[ L^{\ast },X%
\right] \right) \otimes dB\left( t\right) 
\end{eqnarray}
where
\begin{eqnarray}
\mathcal{L}X &=&-X\left( \frac{1}{2}L^{\ast }L+iH\right) -\left( \frac{1}{2}%
L^{\ast }L-iH\right) X+L^{\ast }XL \nonumber \\
&=&\frac{1}{2}\left[ L^{\ast },X\right] L+\frac{1}{2}L^{\ast }\left[ X,L%
\right] -i\left[ X,H\right] .
\end{eqnarray}
We note that we obtain the typical Lindblad form for the generator.

\paragraph{Scattering Interactions}
We mention that we could also treat a Hamiltonian with only scattering terms
Let us set $\Upsilon \left( t\right) =E\otimes b\left( t\right) ^{\ast }b\left(
t\right) $. The same sort of argument leads to
\begin{eqnarray}
\left[ b\left( t\right) ,U\left( t\right) \right] =-iE\int_{0}^{t}\left[
b\left( t\right) ,b\left( \tau \right) ^{\ast }\right] b\left( \tau \right)
U\left( \tau \right) d\tau =-\frac{i}{2}Eb\left( t\right) U\left( t\right) ,
\end{eqnarray}
which can be rearranged to give
\begin{eqnarray}
b\left( t\right) U\left( t\right) =\frac{1}{I-\frac{i}{2}E}U\left( t\right)
b\left( t\right) .
\end{eqnarray}
So the Wick ordered form is
\begin{eqnarray}
\dot{U}\left( t\right) =Eb\left( t\right) ^{\ast }b\left( t\right) U\left(
t\right) =\frac{E}{I-\frac{i}{2}}b\left( t\right) ^{\ast }U\left( t\right)
b\left( t\right) 
\end{eqnarray}
or in quantum Ito form
\begin{eqnarray}
dU\left( t\right) =\left( S-I\right) \otimes d\Lambda \left( t\right)
\,U\left( t\right) ,\qquad \left( S=\frac{I+\frac{i}{2}E}{I-\frac{i}{2}E}%
\text{, unitary!}\right) .
\end{eqnarray}
The Heisenberg equation here is $dj_{t}\left( X\right) =j_{t}\left( S^{\ast
}XS-X\right) \otimes d\Lambda \left( t\right) $.

This is all comparable to the classical Poisson process driven evolution involving unitary kicks.

\paragraph{The SLH Formalism}
We now outline the so-called \textit{SLH} Formalism - named after the scattering matrix operator $S$, the coupling vector operator $L$ and Hamiltonian $H$ appearing in these Markov models \cite{GouJam09a}-\cite{CKS}.
The examples considered up to now used only one species of quanta. We could in fact have $n$ channels, based on $n$ quantum white noises:
\begin{eqnarray}
 [  b_j (t) ,
b^\ast_k (s) ] = \delta_{jk} \, \delta (t-s) .
\end{eqnarray}

The most general form of a unitary process with fixed coefficients may be described as follows:
we have a \textit{Hamiltonian} $H=H^\ast$, a column vector of coupling/ collapse operators
\begin{eqnarray}
 L=\left[
\begin{array}{c}
 L_{1} \\
\vdots  \\
L_{n}
\end{array}
\right] ,
\end{eqnarray}
and a matrix of operators
\begin{eqnarray}
 S=\left[
\begin{array}{ccc}
S_{11}  &\cdots  &  S_{1n}  \\
\vdots  & \ddots  & \vdots  \\
 S_{n1}  & \cdots  &  S_{nn} 
\end{array}
\right] ,
\qquad   S^{-1} =  S^\ast .
\end{eqnarray}

For each such triple $(S,L,H)$ we have the QSDE
\begin{eqnarray}
 d U(t) &=& \bigg\{
\sum_{jk}(  S_{jk} - \delta_{jk} I)\otimes
d \Lambda_{jk} (t)
+ \sum_j L_j  \otimes dB_j^\ast (t) \nonumber  \\
&&
- \sum_{jk} L_j^\ast S_{jk} \otimes dB_k (t) 
 -(\frac{1}{2} \sum_k L_k^\ast L_k +
i H ) \otimes dt \bigg\} \, U(t)
\label{eq:SLH_QSDE}
\end{eqnarray}
which has, for initial condition $U(0) =I$, a solution which is a unitary adapted quantum stochastic process. The emission-absorption case is the $n=1$ model with no scattering ($S=I$). Likewise the purse scattering corresponds to $H=0$ and $L=0$.

\paragraph{Heisenberg-Langevin Dynamics}
System observables evolve according to the Heisenberg-Langevin equation
\begin{eqnarray}
d j_t ( X) &=& \sum_{jk}
j_t 
(S^\ast_{lj}XS_{lk}- \delta_{jk} X)
d \Lambda_{jk} (t)
+ \sum_{jl}j_t ( S_{lj}^\ast [L_l,X])\otimes dB_j (t)^\ast \nonumber  \\
&&  + \sum_{lk} j_t ([X,L^\ast_l ] S_{lk}) \otimes dB_k (t)
+j_t ( \mathscr{L} X)\otimes dt .
\end{eqnarray}
where the generator is the traditional Lindblad form
\begin{eqnarray}
\mathscr{L} X = \frac{1}{2}\sum_k L^\ast_k [X,L_k]
+ \frac{1}{2} \sum_k [L^\ast_k  , X] L_k -i [X,H ] .
\end{eqnarray}

\paragraph{Quantum Outputs}
The output fields are defined by
\begin{eqnarray}
B^{\text{out}}_k (t) =
U(t)^\ast [ I \otimes B_k (t) ] U(t).
\end{eqnarray}

From the quantum Ito calculus we find that
\begin{eqnarray}
dB^{\text{out}}_j (t) = \sum_k j_t (S_{jk}) \otimes dB_k (t)
+ j_t (L_k) \otimes dt ,
\end{eqnarray}
Or, maybe more suggestively in quantum white noise language \cite{Gardiner_Collett},
\begin{eqnarray}
b^{\text{out}}_j (t) = \sum_j j_t (S_{jk})
 \otimes b_k(t) +j_t (L_j) \otimes I.
\end{eqnarray}

\subsection{Quantum Filtering}
We now set up the quantum filtering problem. For simplicity, we will take $n=1$ and set $S=I$ so that we have a simple emission-absorption interaction. We will also consider the situation where we measure the $Q$-quadrature of the output.

The initial state is taken to be $|\psi_0 \rangle \otimes |\Omega \rangle $, and in the Heisenberg picture this is fixed for all time.

The analogue of the stochastic dynamical equation considered in the classical filtering problem is the Heisenberg-Langevin equation
\begin{eqnarray}
dj_{t}\left( X\right)  =j_{t}\left( \mathcal{L}X\right) \otimes dt+j_{t}\left( \left[ X,L\right]
\right) \otimes dB\left( t\right) ^{\ast }+j_{t}\left( \left[ L^{\ast },X%
\right] \right) \otimes dB\left( t\right) 
\end{eqnarray}
where $\mathcal{L}X =\frac{1}{2}\left[ L^{\ast },X\right] L+\frac{1}{2}L^{\ast }\left[ X,L%
\right] -i\left[ X,H\right] $.

Some care is needed in specifying what exactly we measure: we should really work in the Heisenberg picture for clarity. The $Q$-quadrature of the input field is 
$Q\left( t\right) =B\left( t\right) +B\left( t\right) ^{\ast }$ which we have already seen is a Wiener process for the vacuum state of the field. Of course this is not what we measure - we measure the output quadrature!

Set
\begin{eqnarray}
Y^{\text{in}}\left( t\right) =I\otimes Q\left( t\right) .
\end{eqnarray}
As indicated in our discussion on von Neumann's measurement model, what we actually measure is
\begin{eqnarray}
Y^{ \text{out}} (t) = U(t)^\ast Y^{\text{in}} (t) U(t) = B^{\text{out} } (t) + B^{\text{out}} (t)^\ast .
\end{eqnarray}
The differential form of this is
\begin{eqnarray}
dY^{\text{out}} (t) = dY^{\text{in}} (t) + j_t (L+L^\ast ) dt .
\end{eqnarray}
Note that
\begin{eqnarray}
dY^{\text{in}}\left( t\right) dY^{\text{in}}\left( t\right) =dt=dY^{\text{out%
}}\left( t\right) dY^{\text{out}}\left( t\right) .
\end{eqnarray}

The dynamical noise is generally a quantum noise and can only be considered classical in very special circumstances, while the observational noise is just its $Q$-quadrature which can hardly be treated as independent!

In complete contrast to the classical filtering problem we considered earlier, we have no paths for the system - just evolving observables of the system. What is more these observables do not typically commute amongst themselves, or indeed the measured process.

We can only apply Bayes Theorem in the situation where the quantities involved have a joint probability distribution, and in the quantum world this requires them to be compatible. At this stage it may seem like a miracle that we have any theory of filtering in the quantum world. However, let us stake stock of what we have.

\paragraph{What Commutes With What?}

For fixed $s \ge 0$, let $U(t,s)$ be the solution to the QSDE (\ref{eq:SLH_QSDE}) in time variable $ t \ge s$ with $U(s,s)=I$. Formally, we have 
\begin{eqnarray}
U\left( t,s\right) =\vec{T}e^{-i\int_{s}^{t}\Upsilon \left( \tau \right) d\tau }
\end{eqnarray}
which is the unitary which couples the system to the part of the field that enters over the time $s\leq
\tau \leq t$. In terms of our previous definition, we have $U(t) = U(t,0)$ and we have the property
\begin{eqnarray}
U\left( t\right) =U\left( t,s\right) U\left( s\right) ,\qquad \left(
t>s>0\right) .
\end{eqnarray}

In the Heisenberg picture, the observables evolve
\begin{eqnarray}
j_{t}\left( X\right)  &=&U\left( t\right) ^{\ast }\left[ X\otimes I\right]
U\left( t\right) , \\
Y^{\text{out}}\left( t\right)  &=&U\left( t\right) ^{\ast }\left[ I\otimes
Q\left( t\right) \right] U\left( t\right) .
\end{eqnarray}

We know that the input quadrature is self-commuting, but what about the output one?
A key identity here is that 
\begin{eqnarray}
Y^{\text{out}}\left( t\right) =U\left( t\right) ^{\ast }Y^{\text{in}}\left(
s\right) U\left( t\right) ,\qquad \left( t>s\right) ,
\end{eqnarray}
which follows from the fact that $\left[ Y^{\text{in}}\left( s\right)
,U\left( t,s\right) \right] =0$. 

\bigskip 

From this, we see that the process $Y^{\text{out}}$ is also commutative since
\begin{eqnarray}
\left[ Y^{\text{out}}\left( t\right) ,Y^{\text{out}}\left( s\right) \right]
=U\left( t\right) ^{\ast }\left[ Y^{\text{in}}\left( t\right) ,Y^{\text{in}%
}\left( s\right) \right] U\left( t\right) =0,\quad \left( t>s\right) .
\end{eqnarray}
If this was not the case then subsequent measurements of the process $Y^{\text{out}}$ would invalidate (disturb?) earlier ones. In fancier parlance, we say that process is \textit{not self-demolishing} - that is, all parts are compatible with each other.

A similar line of argument shows that 
\begin{eqnarray}
\left[ j_{t}\left( X\right) ,Y^{\text{out}}\left( s\right) \right] =U\left(
t\right) ^{\ast }\left[ X\otimes I,I\otimes Q\left( t\right) \right] U\left(
t\right) =0,\quad \left( t>s\right) .
\end{eqnarray}
Therefore, we have a joint probability for $j_{t}\left( X\right) $ and the continuous collection of observables $\left\{
Y^{\text{out}}\left( \tau \right) :0\leq \tau \leq t\right\} $ so can use
Bayes Theorem to estimate $j_t (X)$ for any $X$ using the past observations.
Following V.P. Belavkin, we refer to this as the \textit{non-demolition principle}.

\paragraph{The Conditioned State}
In the Schr\"{o}dinger picture, the state at time $t \ge 0$ is $|\Psi _{t}\rangle =U\left( t\right) |\phi \otimes
\Omega \rangle $, so
\begin{eqnarray}
d|\Psi _{t}\rangle  &=&-\left( \frac{1}{2}L^{\ast }L+iH\right) |\Psi
_{t}\rangle dt+LdB\left( t\right) ^{\ast }|\Psi _{t}\rangle -L^{\ast
}dB\left( t\right) |\Psi _{t}\rangle \nonumber  \\
&=&-\left( \frac{1}{2}L^{\ast }L+iH\right) |\Psi _{t}\rangle dt+LdB\left(
t\right) ^{\ast }|\Psi _{t}\rangle +LdB\left( t\right) |\Psi _{t}\rangle \nonumber   \\
&=&-\left( \frac{1}{2}L^{\ast }L+iH\right) |\Psi _{t}\rangle dt+LdY^{\text{in}} (t)|\Psi
_{t}\rangle .
\end{eqnarray}

Here we have used a profound trick due to A.S. Holevo. The differential $dB(t)$ acting on $ | \Psi_t \rangle$ yields zero since it is future pointing and so only affects the future part which, by adaptedness, is the vacuum state of the future part of the field. To get from the first line to the second line, we remove and add a term that is technically zero. In its reconstituted form, we obtain the $Q$-quadrature of the input. The result is that we obtain an expression for the state $| \Psi_t \rangle$ which is \lq\lq diagonal\rq\rq \, in the input quadrature - our terminology here is poor (we are talking about a state not and observable!) but hopefully wakes up physicists to see what's going on.

The above equation is equivalent to the SDE in the system Hilbert space
\begin{eqnarray}
d|\chi _{t}\rangle =-\left( \frac{1}{2}L^{\ast }L+iH\right) |\chi
_{t}\rangle dt+L|\chi _{t}\rangle dy_{t}
\label{eq:BZ}
\end{eqnarray}
where $\mathbf{y}$ is a sample path - or better still, \textit{eigen-path} - of the quantum stochastic process $Y^{\text{in}}$.

We refer to (\ref{eq:BZ}) as the \textit{Belavkin-Zakai equation}.

\paragraph{The Quantum Filter}
Let us begin with a useful computational
\begin{eqnarray}
\langle \phi \otimes \Omega |j_{t}\left( X\right) F\left[ Y_{\left[ 0,t%
\right] }^{\text{out}}\right] |\phi \otimes \Omega \rangle 
&=&
\langle \phi \otimes \Omega | U(t)^\ast \big(  X\otimes  F\left[ Y_{\left[ 0,t%
\right] }^{\text{in}}\right] \big) U(t) |\phi \otimes \Omega \rangle \notag \\
&=&
\langle \Psi_t |   X\otimes  F\left[ Y_{\left[ 0,t%
\right] }^{\text{in}}\right]  |\Psi_t \rangle \notag \\
 &=&
\int
\langle \chi_t (\mathbf{y} ) |  X\otimes |\chi_t (\mathbf{y}) \rangle \, F\left[ \mathbf{y}\right]  
\, \mathbb{P}_{\text{Wiener}} [d \mathbf{y}]. \notag  \\
&& \quad
\label{eq:big}
\end{eqnarray}

A few comments are in order here. The operator $j_{t}\left( X\right) $ will commute with any functional of the past measurements - here $F\left[ Y_{\left[ 0,t\right] }^{\text{out}}\right] $. In the first equality is pulling things back in terms of the unitary $U(t)$. The second is just the equivalence between Schr\"{o}dinger and Heisenberg pictures. The final one just uses the equivalent form (\ref{eq:BZ}): note that the paths of the input quadrature gets their correct weighting as Wiener processes.

Setting $X=I$ in (\ref{eq:big}), we get the
\begin{eqnarray}
\langle \phi \otimes \Omega | F\left[ Y_{\left[ 0,t%
\right] }^{\text{out}}\right] |\phi \otimes \Omega \rangle 
&=&
\int
\langle \chi_t (\mathbf{y} |  \chi_t (\mathbf{y} )\rangle \, F\left[ \mathbf{y}\right]  
\, \mathbb{P}_{\text{Wiener}} [d \mathbf{y}]
\end{eqnarray}
So the probability of the measured paths is
\begin{eqnarray}
\mathbb{Q} [d \mathbf{y}] =
\langle \chi_t (\mathbf{y} )|  \chi_t (\mathbf{y}) \rangle \, \mathbb{P}_{\text{Wiener}} [d \mathbf{y}] .
\end{eqnarray}
Now this last equation deserves some comment! The vector $| \Psi_t \rangle$, which lives in the system tensor Fock space, is properly normalized, but its corresponding form $ | \chi _t \rangle$ is not! The latter is a stochastic process taking values in the system Hilbert space and is adapted to input quadrature. However, we never said that $| \chi_t \rangle$ had to be normalized too, and indeed it follows from or \lq\lq diagonalization\rq\rq \, procedure. In fact, if $| \chi_t \rangle$ was normalized then the output measure would follow a Wiener distribution and  so we would be measuring white noise!

From (\ref{eq:big}) again, we an deduce the filter: we get (using the arbitrariness of the functional $F$)
\begin{eqnarray}
\mathfrak{E}_t (X) = \frac{ \langle \chi_t (\mathbf{y}) | X | \chi_t (\mathbf{y}) \rangle}{
\langle \chi_t (\mathbf{y} )|  \chi_t (\mathbf{y}) \rangle } .
\end{eqnarray}

This has a remarkable similarity to (\ref{eq:filter_pi}). Moreover, using the Ito calculus see that
\begin{eqnarray}
d \langle \chi_t (\mathbf{y}) | X | \chi_t (\mathbf{y}) \rangle
&=&
\langle \chi_t (\mathbf{y}) | \mathcal{L} X | \chi_t (\mathbf{y}) \rangle  dt \notag \\
&&+\langle \chi_t (\mathbf{y}) |  \big( XL+L^\ast X \big) |\chi_t (\mathbf{y} )\rangle \, dy(t).
\end{eqnarray}
This is the quantum analogue of the Duncan-Mortensen-Zakai equation.

So small work is left in order to derive the filter equation. We first observe that the normalization (set $X=I$) is that
\begin{eqnarray}
d \langle \chi_t (\mathbf{y}) |  \chi_t (\mathbf{y}) \rangle
=\langle \chi_t (\mathbf{y}) |  \big( L + L^\ast  \big) |\chi_t (\mathbf{y} )\rangle \, dy(t).
\end{eqnarray}
Using the Ito calculus, it is then routine to show that the quantum filter is
\begin{eqnarray}
 d \mathfrak{E}_t (X) =  \mathfrak{E}_t ( \mathcal{L}X) \, dt + \big\{ \mathfrak{E}_t (XL+L^\ast X ) - \mathfrak{E}_t (X) \mathfrak{E}_t (L+L^\ast ) \big\} dI(t)
\end{eqnarray}
where the innovations are defined by
\begin{eqnarray}
 dI (t) = dY^{\text{out}} (t) - \mathfrak{E}_t (L+L^\ast ) \, dt .
\end{eqnarray}
Again, the innovations have the statistics of a Wiener process. As in the classical case, the innovations give the difference between what we observe next, $dY^{\text{out}} (t)$, and what we would have expected based on our observations up to that point, $\mathfrak{E}_t (L+L^\ast ) \, dt $. The fact that the innovations are a Wiener process is a reflection of the efficiency of the filter - after extracting as much information as we can out of the observations, we are left with just white noise.

\section*{Acknowledgements}
I would like to thank the staff at CIRM, Luminy (Marseille), and at Institut Henri Poincar\'{e} (Paris) for their kind support during the 2018 Trimester on Measurement and Control of Quantum Systems where this work was began. I am also grateful to the other organizers Pierre Rouchon and Denis Bernard for valuable comments during the writing of these notes.

%\bibliography{mybibfile}

\end{document}